\newtheorem{theorem}{Theorem}
\newtheorem{problem}[theorem]{Problem}
\newcommand{\R}{{\mathbb R}}
\newcommand{\cC}{{\mathcal C}}
\newcommand{\cE}{{\mathcal E}}
\newcommand{\cK}{{\mathcal K}}
\newcommand{\cL}{{\mathcal L}}
\renewcommand{\L}{{\mathcal L}}
\newcommand{\cP}{{\mathcal P}}
\newcommand{\cX}{{\mathcal X}}
\newcommand{\bx}{{\mathbf x}}
\newcommand{\by}{{\mathbf y}}
\newcommand{\bX}{{\mathbf X}} %{{\bi{X}}}
\newcommand{\bP}{{\mathbf P}} %{{\bi{P}}}
\newcommand{\bR}{{\mathbf R}} %{{\bi{R}}}
\newcommand{\bpi}{{\boldsymbol\pi}}
\DeclareOldFontCommand{\rm}{\normalfont\rmfamily}{\mathrm}
\def\spacingset#1{\def\baselinestretch{#1}\small\normalsize}
\definecolor{grey}{rgb}{0.6,0.6,0.6}
\definecolor{lightgray}{rgb}{0.97,.99,0.99}
\definecolor{orange}{rgb}{1,.49,0}
\definecolor{darkblue}{rgb}{0.,0.,0.6}
\newcommand{\black}{\color{black}}
\begin{document}
\title{
The most likely evolution of diffusing and vanishing particles:  Schr\"odinger Bridges with unbalanced marginals
}

\author{Yongxin Chen, Tryphon T. Georgiou and Michele Pavon
\thanks{Y.\ Chen is with the School of Aerospace Engineering, Georgia Institute of Technology, Atlanta, GA 30332, USA} 
\thanks{T.T.\ Georgiou is with the Department of Mechanical and Aerospace Engineering, University of California, Irvine, CA 92697, USA}
\thanks{M. Pavon is with the Department of Mathematics ``Tullio Levi-Civita", Universit\`a di Padova, 35121 Padova, Italy} }

\maketitle

\begin{abstract}
Stochastic flows of an advective-diffusive nature are ubiquitous in biology and the physical sciences. Of particular interest is the problem to reconcile observed marginal distributions with a given prior posed by E.\ Schr\"odinger in 1932/32 and known as the Schr\"odinger Bridge Problem (SBP). It turns out that Schr\"odinger's problem can be viewed both as a modeling as well as a control problem. Due to the fundamental significance of this problem, interest in SBP and in its deterministic (zero-noise limit) counterpart of Optimal Mass Transport (OMT) has in recent years enticed a broad spectrum of disciplines, including physics, stochastic control, computer science, probability theory, and geometry.
Yet, while the mathematics and applications of SBP/OMT have been developing at a considerable pace, accounting for marginals of unequal mass has received scant attention; the problem to interpolate between ``unbalanced'' marginals has been approached by introducing source/sink terms into the transport equations, in an adhoc manner, chiefly driven by applications in image registration.

Nevertheless, losses are inherent in many physical processes and, thereby, models that account for lossy transport may also need to be reconciled with observed marginals following Schr\"odinger's dictum; that is, to adjust the probabilty of trajectories of particles, including those that do not make it to the terminal observation point,  so that
the updated law represents the  {\em most likely way that particles may have been transported, or  vanished,   at some intermediate point}. Thus, the purpose of this work is to develop such a natural generalization of the SBP for stochastic evolution with losses, whereupon particles are ``killed''  (jump into a coffin/extinction state) according to a probabilistic law, and thereby mass is gradually lost along their stochastically driven flow. Through a suitable embedding we turn the problem into an {\em SBP for stochastic processes that combine diffusive and jump characteristics}.  Then, following a large-deviations formalism in the style of E.\ Schr\"odinger, given a prior law that allows for losses, we ask for the most probable evolution of particles along with the most likely killing rate as the particles transition between the specified marginals. Our approach differs sharply from previous work involving a Feynman-Kac multiplicative 
reweighing of the reference measure: The latter, as we argue, is far from Schr\"odinger's quest. An iterative scheme, generalizing the celebrated Fortet-IPF-Sinkhorn algorithm, permits to compute the new drift {\em and} the new killing rate of the path-space solution measure.  We finally formulate and solve a related fluid-dynamic control problem for the flow of one-time marginals were both the drift and the new killing rate play the role of control variables. 
\end{abstract}

\section{Introduction}

Consider the problem of estimating the velocity field of oceanic currents by releasing into the water 
a cloud of tracer particles and by sampling their distribution at a later time. The diffusion coefficient is assumed known and the original cloud that is released at time $t=0$ consists of $N$ particles. These are expected to remain in suspension for a duration of time while they diffuse and drift with the current. 
At time $t=1$, their distribution is sampled again. Some of the particles in the meantime have sunk, so that the number of found particles is less than  $N$.
Suppose this experiment is performed several times, treating the model originating from previous experiments as a ``prior". Is it conceivable to ``improve" a prior model in a rational way? More explicitly, by relying on a prior model and the new sampling result, is it possible to determine an updated model that represents the most probable way that the tracer cloud may have been transported?
\black

At first sight, this problem appears to be of a different nature than those treated in the theory of Large Deviations \cite{varadhan1966asymptotic,varadhan1984large,DemZei09}, in that the sought path-space measure is not a probability measure per se. Nevertheless,  in spite of the paucity of the available data, it is possible to solve this inverse problem by a natural embedding technique. A  byproduct is a physically motivated framework to interpolate distributions of unequal mass (integrals). The blueprint for the rationale in our work has been provided by the celebrated duo of papers by E.\ Schr\"odinger in 1931/32 \cite{Sch31,Sch32} where he considered the problem of reconciling marginal distributions with a prior stochastic evolution. 

The original Schr\"odinger Bridge Problem (SBP) asks for the most likely evolution of stochastic  particles as they travel between marginal probability densities $\rho_0$ and $\rho_1$, specified at two points in time (taken as $t_0=0$ and $t_1=1$ without loss of generality), when these marginals fail to be consistent with a known {\em prior law}. 
Interestingly, Schr\"odinger considered this abstract problem before a theory of continuous parameter stochastic processes was in place, and had only been preceded by Ludwig Boltzmann \cite{boltzmann1877uber}. Schr\"odinger attacks the problem very much in his countryman's style, through coarse graining, and applying the De Moivre-Stirling formula and Lagrange multipliers. In spite of the lack of proper tools (Sanov's theorem \cite{sanov1961probability}
will be published in Russian only in 1957), he arrives at the correct answer \cite{Sch31,Sch32}, that the most likely evolution is obtained by solving a certain two-point boundary value problem (Schr\"odinger system of equations).  Important contributions to this theory are then provided by Fortet, Beurling and Jamison \cite{fortet1940resolution,beurling1960automorphism,Jam74,jamison1975markov}. It took more than half a century  before F\"ollmer \cite{Fol88}, recovering Schr\"odinger's original motivation, properly cast the problem within the framework of Large-Deviations theory. The field has since seen several other significant contributions, a partial selection being \cite{zambrini1986stochastic,nagasawa1989transformations,wakolbinger1989simplified,Bla92,dawson1990schrodinger,nagasawa1990stochastic,Wak90,aebi1992large,Mik04,MikThi08,CheGeoPav14a,Leo12,Leo14,CheGeoPav14e,SIREV,Con19}. Here \cite{Wak90,Leo14,SIREV} are survey papers. Observe that, in view of  Sanov's theorem \cite{DemZei09},   the SBP amounts to seeking a new probability law on the path space of the stochastic particles that is consistent with the given marginals and, at the same time, is the closest to the prior probability law in the relative entropy sense.

Schr\"odinger's Bridge Problem (SBP), as well as its zero-noise limit of Optimal Mass Transport (OMT), continue to impact a growing range of disciplines and applications. In this expanding mathematical landscape, the problem to account for variable mass along the transport path received attention from early on. It was  chiefly motivated by the need to interpolate distributions of unequal mass for times series spectral analysis and image registration  \cite{koehl2021physics}.
The viewpoint that is being pursued herein is closer in spirit to the original rationale of E.\ Schr\"odinger as we build on a Large Deviations formalism. To this end, we consider below a diffusion process with killing and seek the closest update of the corresponding law that is in agreement with the marginal data.
Thus, we ask for the most likely evolution of stochastic particles which are known to obey a given {\em prior law with potential for losses} (``killing rate'') while they transition between two marginal distributions $\rho_0$ and $\rho_1$ as before. The two distributions are not necessarily consistent with the prior law and neither is the loss of mass necessarily consistent with the prior killing rate. 

In our formulation of the unbalanced Schr\"odinger Bridge Problem (uSBP), the marginals cannot be assumed to be probability distributions as their integrals differ due to losses. To this end, we embed the distributions into a frame that includes a coffin/extinction state, leading to a probability law on a continuum together with a discrete state.  Thereupon, we find the updated law and killing rate that minimize the relative entropy to the prior with losses, and are consistent with the two marginals. In the special case when the marginals are already consistent with the prior, naturally, the solution coincides with the lossy prior, differently from what happens in other formulations of SBP with killing which are based on Feyman-Kac functionals \cite{nagasawa1990stochastic,wakolbinger1989simplified,Bla92,dawson1990schrodinger,aebi1992large,leonard2011stochastic,CheGeoPav14c,CheGeoPav17a}, and unbalanced transport \cite{chizat2018scaling,chizat2018unbalanced,chen2019interpolation,koehl2021physics}, as discussed in Section \ref{sec:killing}.

The structure of the paper is as follows. In Section \ref{sec:SBP} we revisit classical Schr\"odinger bridge problems. The main framework on Schr\"odinger bridges with unbalanced marginals is presented in Section \ref{sec:uSBP}. We also present a fluid dynamic formulation of the main framework in Section \ref{sec:fluid}. A comprehensive comparison between Schr\"odinger bridges with unbalanced marginals and existing results on Schr\"odinger bridges with killing is provided in Section \ref{sec:killing}. This is followed by a numerical example in Section \ref{sec:example} and a concluding remark in Section \ref{sec:conclusion}. 

\section{Preliminaries on Schr\"odinger Bridge Problem}\label{sec:SBP}
We briefly review elements of the theory of Schr\"odinger Bridge Problem (SBP). 
To this end, consider a diffusion process
	\begin{equation}\label{eq:diffusion}
		dX_t = b(t,X_t)dt + \sigma(t,X_t) dW_t,
	\end{equation}
over the Euclidean space $\R^n$. 
In Schr\"odinger's original thought experiment, a large number $N$ of trajectories over the time interval $[0,1]$ are independently sampled from \eqref{eq:diffusion} with probability distribution of $X_t$ at the initial time $t=0$ being $\rho_0$. The law of large numbers dictates that the terminal distribution at time $t=1$ must be
(approximately)
	\begin{equation}\label{eq:largenumber}
		\int_{\R^n} q(0,x,1,\cdot) \rho_0(x) dx,
	\end{equation}
where $q(t,x,s,y),\, t<s$ denotes the kernel of transition rates from state $x$ at time $t$ to state $y$ at time $s$. 
Now suppose the observed marginal distribution  at time $t=1$, denoted by $\rho_1$, is inconsistent with \eqref{eq:largenumber} and the prior kernel $q(0,x,1,y)$, that is,
	\[
		\rho_1(\cdot) \neq \int_{\R^n} q(0,x,1,\cdot) \rho_0(x) dx.
	\]
Schr\"odinger's problem then seeks the most likely evolution that the particles may have taken between the specified marginals. That is, SBP seeks a suitable update of the law of the diffusion process that reconciles the two marginals $\rho_0,\rho_1$. In the sequel and for notational simplicity, we use the same symbol $\rho$ to denote both, the probability density, as well as the corresponding measure $d\rho= \rho dx$, depending on the context. 

As first noted by F\"ollmer \cite{Fol88}, SBP can be more clearly expressed in the language of the theory of {\em large deviations} \cite{DemZei09}. Specifically, let $\Omega = C([0,1],\R^n)$ denote the space of continuous functions on $[0,1]$ with values in $\R^n$, and $\cP(\Omega)$ denote the space of probability laws over $\Omega$.  Given any two probability measures $P,Q$, the {\em relative entropy} of $P$ with respect to $Q$ is 
	\begin{equation}
		H(P\mid Q) = 
		\begin{cases}
			\int dP \log \frac{dP}{dQ} & \mbox{if}~~ P \ll Q
			\\
			+\infty & \mbox{otherwise}.
		\end{cases}
	\end{equation}
Now consider $N$ independent trajectories 
 $X_t^1, X_t^2,\ldots, X_t^N \in \Omega$ 
of a diffusion having law $R\in\cP(\Omega)$, and let $L_N$ denote their empirical distribution.
Then, asymptotically as $N\to\infty$, Sanov's theorem\footnote{ Sanov's theorem holds when the process takes values in any Polish space.} gives the exponential rate of decay for the probability of occurence of an empirical distribution that differs from the law $R$ \cite{DemZei09} as
	\begin{equation}\label{eq:sanov} 
		{\rm Prob} (L_N \in A) \approx \exp (-N \inf_{P\in A} H(P\mid R)), ~~\forall A \subset \cP(\Omega).
	\end{equation}
Thus, Sanov's result expresses the likelihood of observing an empirical distribution approximated by $P$ in terms of the relative entropy $H(P\mid R)$. Thence, SBP can be formulated as follows:\\

\begin{problem}
Let $R\in\cP(\Omega)$ be the probability measure on $\Omega$ induced by the prior process \eqref{eq:diffusion} with initial distribution $\rho_0$. Determine
	\begin{equation}\label{eq:SBP}
		P^\star := {\rm arg}\min_{P\in \cP(\Omega)} \left\{H(P\mid R)~\mid~ P_0 = \rho_0, P_1 = \rho_1\right\},
	\end{equation}
	where $P_t$ denotes the marginal distribution of $P$ at time $t$ (i.e., the push forward $X_t\#P=P_t$).
\end{problem}

The entropy functional is strictly convex which ensures uniqueness of the minimizer (when it exists). Further, if $R_0=\rho_0$ as well as $R_1 =\rho_1$, the solution to SBP coincides (trivially) with the prior law $R$, i.e., $P^\star=R$, achieving the minimal value $H(P^\star\mid R)=0$. When
$R_1\neq \rho_1$, the SBP thus seeks an updated law $P^\star$ that is closest to the prior in the sense of relative entropy and restores consistency with the marginals (which fails for the prior $R$).
Next we briefly discuss the solution to SBP. For an in depth exposition see \cite{Leo14} and the review articles \cite{chen2021optimal,SIREV}.

By disintegration of measure,
\begin{align*}
		R(\cdot) &= \int_{\R^n\times \R^n} R^{xy}(\cdot) R_{01}(dxdy), \mbox{ and}\\
		P(\cdot) &= \int_{\R^n\times \R^n} P^{xy}(\cdot) P_{01}(dxdy),
\end{align*}	
where $R_{01}$ ($P_{01}$) denotes the joint marginal distribution of $R$ ($P$) of $X_t$ for $t\in\{0,1\}$ (i.e., $R_{01}=X_{01}\# R$, and similarly, for $P$),
and $R^{xy}$ ($P^{xy}$) denotes the measure induced by $P$ conditioned on $(X_0 = x, X_1 = y)$.
It follows that 	
	\begin{equation}\label{eq:Hdisintegration1}
		H(P\mid R ) =  H(P_{01} \mid R_{01}) + \int H(P^{xy} \mid R^{xy}) P_{01}(dxdy).
	\end{equation}
Clearly,  when $P^{xy} = R^{xy}$ for any $x,y\in \R^n$, the second term on the right assumes the minimal value $0$. An immediate consequence is the following {\em static} formulation of the SBP.\\

\begin{problem} Determine
	\begin{equation}\label{eq:SSBP}
		\pi^\star := {\rm arg}\min_{\pi\in \cP(\R^n\times \R^n)} \left\{H(\pi\mid R_{01})~\mid~ \pi_0 = \rho_0, \pi_1 = \rho_1\right\}.
	\end{equation}
\end{problem}
To distinguish between the two formulations \eqref{eq:SBP} and \eqref{eq:SSBP}, we refer to \eqref{eq:SBP} as the {\em dynamic} SBP. The two formulations are equivalent in the sense that solving one provides a solution to the other, as noted next.\\

\begin{theorem}[\cite{Leo14}]\label{thm:staticdynamic1}
Suppose $P^\star$ is a solution to the dynamic SBP \eqref{eq:SBP}, then $P^\star_{01}$ solves the static SBP \eqref{eq:SSBP}. On the other hand, if $\pi^\star$ is a solution to \eqref{eq:SSBP}, then setting $P^\star=\int_{\R^n\times \R^n} R^{xy}(\cdot) \pi^\star(dxdy)$ solves \eqref{eq:SBP}, while $P^\star_{01}=\pi^\star$.
\end{theorem}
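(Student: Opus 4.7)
The plan is to exploit the disintegration identity \eqref{eq:Hdisintegration1}. Its second term $\int H(P^{xy}\mid R^{xy})\,P_{01}(dxdy)$ is nonnegative and vanishes precisely when $P^{xy}=R^{xy}$ for $P_{01}$-a.e.\ $(x,y)$. This makes the two directions of the theorem essentially dual: to minimize $H(P\mid R)$ under marginal constraints on $X_0,X_1$, one should first minimize the $H(P_{01}\mid R_{01})$ piece over joint couplings with the prescribed one-time marginals, and then kill the bridge-conditional piece by glueing the prior bridges $R^{xy}$.

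For the forward direction, I would argue by contradiction. Assume $P^\star$ solves the dynamic SBP \eqref{eq:SBP} but $P^\star_{01}$ is not optimal for \eqref{eq:SSBP}; then there is a $\pi\in\cP(\R^n\times\R^n)$ with $\pi_0=\rho_0,\pi_1=\rho_1$ and $H(\pi\mid R_{01})<H(P^\star_{01}\mid R_{01})$. Define the composite $\tilde P := \int R^{xy}\,\pi(dxdy)$. By construction $\tilde P_{01}=\pi$, so $\tilde P$ is admissible for the dynamic SBP, and the conditional $\tilde P^{xy}=R^{xy}$ kills the second term in \eqref{eq:Hdisintegration1}, giving $H(\tilde P\mid R)=H(\pi\mid R_{01})<H(P^\star_{01}\mid R_{01})\le H(P^\star\mid R)$, a contradiction.

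For the reverse direction, given $\pi^\star$ optimal for \eqref{eq:SSBP}, set $P^\star:=\int R^{xy}\,\pi^\star(dxdy)$. Then $P^\star_{01}=\pi^\star$, so the marginal constraints hold, and \eqref{eq:Hdisintegration1} yields $H(P^\star\mid R)=H(\pi^\star\mid R_{01})$. For any competitor $P$ admissible in the dynamic SBP, $P_{01}$ is admissible in the static SBP, hence by discarding the nonnegative bridge-conditional term and invoking optimality of $\pi^\star$,
\begin{equation*}
H(P\mid R)\;\ge\; H(P_{01}\mid R_{01})\;\ge\; H(\pi^\star\mid R_{01})\;=\;H(P^\star\mid R),
\end{equation*}
so $P^\star$ solves \eqref{eq:SBP}.

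The main subtle point is measure-theoretic: I need to justify that the regular conditional laws $R^{xy}$ of $R$ given $(X_0,X_1)=(x,y)$ exist as a Borel-measurable family on the Polish space $\Omega$, and that the glueing $\int R^{xy}\,\nu(dxdy)$ is a well-defined probability on $\Omega$ whose $(X_0,X_1)$-marginal is $\nu$. Existence of the disintegration is standard on Polish spaces, and the finite-entropy assumption forces $\pi^\star\ll R_{01}$ (otherwise \eqref{eq:SSBP} would be infinite and trivial), which in turn ensures $P^\star\ll R$ with density $dP^\star/dR=(d\pi^\star/dR_{01})\circ(X_0,X_1)$, so both sides of \eqref{eq:Hdisintegration1} are meaningful. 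Once this mild regularity is in place the proof is just the two chains of inequalities above.
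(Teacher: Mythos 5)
Your argument is correct and is exactly the standard way one unfolds the disintegration identity \eqref{eq:Hdisintegration1}, which is all the paper itself invokes (``It follows readily from \eqref{eq:Hdisintegration1}''). You supply the two chains of inequalities and the gluing construction $P=\int R^{xy}\,\pi(dxdy)$ that the paper leaves implicit, plus a brief note on regular conditional probabilities on Polish spaces; this matches the intended proof with the details written out.
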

\begin{proof} It follows readily from \eqref{eq:Hdisintegration1}.
\end{proof}

A direct consequence of Theorem \ref{thm:staticdynamic1} is that the Radon-Nikodym ratios between solutions and priors for the two problems, the static \eqref{eq:SUSBP} and the dynamic \eqref{eq:USBP}, coincide, namely,
	\begin{equation}\label{eq:nikodym}
		\frac{dP^\star}{dR} = \frac{d\pi^\star}{dR_{01}} (X_0,X_1).
	\end{equation}
In fact, this ratio can be factored into two parts, one that depends only on $X_0$ and one that depends on $X_1$, as follows.\\

\begin{theorem}[\cite{Leo14}]\label{prop:solution1}
Assume that $R_{01} \ll R_0\otimes R_1$, and that there exists $\pi\in \cP(\R^n\times \R^n)$ such that $\pi_0 = \rho_0$, $\pi_1 = \rho_1$ (i.e., feasible), for which $H(\pi \mid R_{01})<+\infty$. Then the static problem \eqref{eq:SSBP} admits a unique solution $\pi^\star$ and there exist two measurable functions $f, g: \cX \rightarrow \R_+$ such that
	\begin{equation}\label{eq:optpi}
		\pi^\star = f(X_0) g(X_1) R_{01}.
	\end{equation}
The two factors $f, g$ are solutions to the Schr\"odinger system
	\begin{subequations}\label{eq:SBsysC1}
	\begin{eqnarray}
		\frac{d\rho_0}{dR_0}(x) &=& f(x) R(g(X_1)\mid X_0 = x),
		\\
		\frac{d\rho_1}{dR_1}(y) &=& g(y) R(f(X_0) \mid X_1 = y).
	\end{eqnarray}
	\end{subequations}
Moreover, the unique solution to the dynamic problem \eqref{eq:SBP} is 
	\begin{equation}\label{eq:optP}
		P^\star =  f(X_0) g(X_1) R.
	\end{equation}
\end{theorem}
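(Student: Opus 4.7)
The plan is to establish, in order, (i) existence and uniqueness of the static minimizer, (ii) the product-form factorization $d\pi^\star/dR_{01}=f(x)g(y)$, (iii) the Schr\"odinger system as the marginal constraints rewritten under this factorization, and (iv) the dynamic formula via Theorem \ref{thm:staticdynamic1}. Throughout I treat $R_{01}$ as a fixed reference measure on $\R^n\times\R^n$ and work on the convex set $\Pi(\rho_0,\rho_1)$ of couplings with the prescribed marginals.

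For step (i), the functional $\pi\mapsto H(\pi\mid R_{01})$ is strictly convex in $\pi$ wherever finite, and weakly lower semicontinuous (standard properties of relative entropy). The feasibility hypothesis yields a sublevel set $\{\pi\in\Pi(\rho_0,\rho_1):H(\pi\mid R_{01})\le c\}$ that is nonempty and, by the fixed marginals plus the entropy bound, tight in the weak topology; hence compact. The direct method gives a minimizer $\pi^\star$, and strict convexity gives uniqueness.

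Step (ii) is the main obstacle and is handled by a variational/duality argument. The idea is to write the Lagrangian
\[
\cL(\pi,\phi,\psi)=H(\pi\mid R_{01})-\int\phi(x)\,[\pi(dx,dy)-\rho_0(dx)\delta(dy)]-\int\psi(y)\,[\pi(dx,dy)-\delta(dx)\rho_1(dy)],
\]
formally differentiate with respect to $\pi$, and read off the Euler--Lagrange condition $\log(d\pi^\star/dR_{01})=\phi(x)+\psi(y)+\text{const}$; setting $f=e^{\phi}$, $g=e^{\psi}$ (absorbing the constant) yields \eqref{eq:optpi}. Making this rigorous requires Fenchel--Rockafellar / Csisz\'ar projection arguments to produce the dual potentials as measurable functions and to ensure the factorization holds $R_{01}$-a.e., not merely formally; the absolute continuity assumption $R_{01}\ll R_0\otimes R_1$ is precisely what guarantees that the product form $f\otimes g$ is not degenerate and that the dual problem is well posed. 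This is exactly the content cited from \cite{Leo14}, which I would invoke rather than reconstruct the duality from scratch. To avoid pitfalls in measurability I would also note that positivity of $f,g$ $R_0$- and $R_1$-a.e.\ follows from the equivalence $\pi^\star\sim R_{01}$ wherever both marginals have full support.

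Step (iii) is a direct computation. Disintegrating $R_{01}(dx,dy)=R_0(dx)\,R(dy\mid X_0=x)$ and integrating \eqref{eq:optpi} over $y$ gives
\[
\rho_0(dx)=\pi^\star_0(dx)=f(x)\Bigl(\int g(y)\,R(dy\mid X_0=x)\Bigr) R_0(dx)=f(x)\,R(g(X_1)\mid X_0=x)\,R_0(dx),
\]
and symmetrically for $\rho_1$, which is \eqref{eq:SBsysC1}. Step (iv) follows by combining Theorem \ref{thm:staticdynamic1} with \eqref{eq:nikodym}: since $\pi^\star=f(X_0)g(X_1)R_{01}$ and $R=\int R^{xy}\,R_{01}(dxdy)$, we get $dP^\star/dR=d\pi^\star/dR_{01}(X_0,X_1)=f(X_0)g(X_1)$, i.e.\ \eqref{eq:optP}. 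The uniqueness part of the dynamic statement is inherited from the uniqueness of $\pi^\star$ together with the bijection between solutions provided by Theorem \ref{thm:staticdynamic1}.
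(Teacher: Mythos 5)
The paper does not prove Theorem~\ref{prop:solution1}: it is stated as a result imported from L\'eonard's survey \cite{Leo14}, and the paper's own proof effort is directed at the unbalanced generalization (Theorem~\ref{thm:SBEsys}, proved in Appendix~B via Hilbert-metric contraction). There is therefore no in-paper argument to compare against, so your sketch can only be judged on its own terms.

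Your outline is correct and follows the standard route. Steps (i), (iii), and (iv) are complete as written: existence and uniqueness from strict convexity, weak lower semicontinuity, and tightness of the coupling set $\Pi(\rho_0,\rho_1)$; the Schr\"odinger system as the two marginal constraints read off the product form via disintegration of $R_{01}$; and the dynamic formula from Theorem~\ref{thm:staticdynamic1} together with \eqref{eq:nikodym}. Step (ii) --- existence of the factorization $d\pi^\star/dR_{01}=f(X_0)g(X_1)$ with genuine measurable, $[0,\infty)$-valued $f,g$ holding $R_{01}$-a.e.\ --- carries essentially the whole weight of the theorem, and the formal Lagrangian stationarity condition you write down is not by itself a proof: dual potentials need not be attainable as finite measurable functions, the constraint qualification must be verified, and the hypothesis $R_{01}\ll R_0\otimes R_1$ enters precisely to keep the product structure nondegenerate. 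You acknowledge this and defer to \cite{Leo14} (Csisz\'ar-type $I$-projection additive decomposition, or Fenchel--Rockafellar duality with a separate dual-attainment argument). That is an honest and defensible choice for a theorem the paper itself only cites, but be aware that step (ii) in your write-up is an appeal to the source rather than a proof; a fully self-contained argument would need to reconstruct that machinery.
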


The above theorem provides an abstract construction of the sought probability law(s) via the solution of the Schr\"odinger system \eqref{eq:SBsysC1}. The local characteristics and the modified stochastic differential equation for the process with law $P^\star$ follow. Computationally, these can be expressed most succinctly in terms of a pair of two, forward and backward in time (and identical to that of the prior Fokker-Planck and its adjoint) equations, that are nonlinearly coupled through boundary conditions. We explain this next.

First recall that the  marginals $R_t(x)$ of the prior law $R$ for the diffusion \eqref{eq:diffusion} satisfy (weakly)  the Fokker-Planck equation
\begin{equation}\label{eq:FK1}
		\partial_t  R_t + \nabla\cdot(b R_t) = \frac{1}{2} \sum_{i,j=1}^n \frac{\partial^2 (a_{ij} R_t)}{\partial x_i\partial x_j}.
	\end{equation}
In what follows, $a(t,x) = \sigma(t,x)\sigma(t,x)'$ is assumed to be everywhere positive definite. Let the two end-point marginals be absolutely continuous with densities $\rho_0$ and $\rho_1$, respectively. The Schr\"odinger system \eqref{eq:SBsysC1} can be reparametrized in terms of
	\begin{subequations}
	\begin{align}
		\hat\varphi(0,x)&:=f(x)R_0(x)\\
		\varphi(1,y)&:=g(y),
		\end{align}
	\end{subequations}
and takes the form
	\begin{subequations}\label{eq:SBsystem1}
	\begin{eqnarray}\label{eq:SBsystema1}
		\partial_t \hat\varphi &=& - \nabla\cdot(b \hat\varphi) +\frac{1}{2} \sum_{i,j=1}^n \frac{\partial^2 (a_{ij} \hat \varphi)}{\partial x_i\partial x_j} 
		\\\label{eq:SBsystemc1}
		\partial_t\varphi &=& - b \cdot\nabla\varphi - \frac{1}{2} \sum_{i,j=1}^n a_{ij}\frac{\partial^2 \varphi}{\partial x_i\partial x_j}
		\\\label{eq:SBsysteme1}
		\rho_0 &=& \varphi(0,\cdot)\hat\varphi(0,\cdot)
		\\\label{eq:SBsystemf1}
		\rho_1 &=& \varphi(1,\cdot)\hat\varphi(1,\cdot).
\end{eqnarray}
	\end{subequations}
	
\begin{theorem}\label{thm:SBEsys1} Let $R$ be the law of \eqref{eq:diffusion} with
$a(t,x) = \sigma(t,x)\sigma(t,x)'$ being positive definite for all $(t,x)\in\R\times \R^n$, and assume that $\rho_0,\rho_1$ are absolutely continuous with respect to the Lebesgue measure. 
There exists a unique (up to a constant positive scaling) pair $(\hat\varphi(t,x), \varphi(t,x))$ of non-negative functions that satisfies the Schr\"odinger system \eqref{eq:SBsystem1}. Moreover, the law $P^\star$ for the dynamic problem \eqref{eq:SBP} is law of the diffusion
	\begin{equation}\label{eq:controldiffusion}
		dX_t = (b(t,X_t) +a(t,X_t)\nabla\log\varphi(t,X_t))dt + \sigma(t,X_t) dW_t,
	\end{equation}
with distribution of $X_0$ being $\rho_0$, and at any time $t\in[0,1]$, the marginal density for $P^\star$ satisfies the identity $p_t(x)=\varphi(t,x)\hat\varphi(t,x)$.\black
\end{theorem}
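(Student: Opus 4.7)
The plan is to reduce Theorem~\ref{thm:SBEsys1} to Theorem~\ref{prop:solution1} by identifying $\hat\varphi$ and $\varphi$ with conditional expectations under $R$, and then to derive the SDE representation of $P^\star$ through Girsanov's theorem. First I would check the hypotheses of Theorem~\ref{prop:solution1}. Since $a(t,x)$ is everywhere positive definite, the diffusion admits a strictly positive transition density $q(s,x,t,y)$ for $s<t$ (classical parabolic theory), so $R_{01}$ is absolutely continuous with respect to $R_0\otimes R_1$. Combined with the absolute continuity of $\rho_0,\rho_1$, a suitably truncated product coupling provides a feasible $\pi$ with finite relative entropy. Theorem~\ref{prop:solution1} then supplies measurable $f,g\geq 0$ with $\pi^\star = f(X_0)g(X_1)R_{01}$ satisfying the Schr\"odinger system \eqref{eq:SBsysC1}.

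Next I would define
\begin{equation*}
\varphi(t,x) := \mE_R[g(X_1)\mid X_t=x], \qquad \hat\varphi(t,x) := R_t(x)\,\mE_R[f(X_0)\mid X_t=x],
\end{equation*}
both nonnegative by construction. Markovianity of $R$ makes $\varphi(t,X_t)$ an $R$-martingale, and the backward Kolmogorov equation for the generator of \eqref{eq:diffusion} is exactly \eqref{eq:SBsystemc1}. Symmetrically, testing $\hat\varphi(t,\cdot)$ against smooth functions and applying It\^o's formula shows that $\hat\varphi$ is a weak solution of the Fokker--Planck equation \eqref{eq:SBsystema1}. The boundary identities \eqref{eq:SBsysteme1}--\eqref{eq:SBsystemf1} are precisely the two equations of \eqref{eq:SBsysC1} rewritten through $\hat\varphi(0,\cdot)=fR_0$ and $\varphi(1,\cdot)=g$. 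Uniqueness up to the scaling $(\hat\varphi,\varphi)\mapsto(c\hat\varphi,\varphi/c)$ then follows from the uniqueness of $\pi^\star$, since $\pi^\star$ pins down the product $\hat\varphi\,\varphi$ on the boundary and the PDEs propagate each factor separately.

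For the SDE representation I would start from $\rho_0 = \hat\varphi(0,\cdot)\varphi(0,\cdot) = f\varphi(0,\cdot)R_0$, so $f(x)\varphi(0,x)=d\rho_0/dR_0(x)$, and rewrite
\begin{equation*}
\frac{dP^\star}{dR} = f(X_0)g(X_1) = \frac{d\rho_0}{dR_0}(X_0)\cdot \frac{\varphi(1,X_1)}{\varphi(0,X_0)}.
\end{equation*}
The first factor changes the initial law from $R_0$ to $\rho_0$, while $M_t := \varphi(t,X_t)/\varphi(0,X_0)$ is a mean-one $R$-martingale. Applying It\^o's formula and using \eqref{eq:SBsystemc1} to cancel the drift gives
\begin{equation*}
dM_t = M_t\,\bigl(\sigma(t,X_t)^\top\nabla\log\varphi(t,X_t)\bigr)\cdot dW_t,
\end{equation*}
so Girsanov's theorem identifies $\tilde W_t := W_t - \int_0^t \sigma(s,X_s)^\top\nabla\log\varphi(s,X_s)\,ds$ as a $P^\star$-Brownian motion. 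Substituting in \eqref{eq:diffusion} yields \eqref{eq:controldiffusion} with $X_0\sim\rho_0$. The marginal identity finally follows from the Markov property: for any test function $\phi$,
\begin{equation*}
\int \phi\, p_t\, dx = \mE_R\!\left[f(X_0)g(X_1)\phi(X_t)\right] = \mE_R\!\left[\mE_R[f(X_0)\mid X_t]\,\mE_R[g(X_1)\mid X_t]\,\phi(X_t)\right] = \int \phi\,\hat\varphi(t,\cdot)\varphi(t,\cdot)\,dx.
\end{equation*}

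The main obstacles I anticipate are technical rather than structural: producing a feasible coupling of finite entropy (Step~1) under only the absolute continuity of $\rho_0,\rho_1$, and securing enough integrability of $\varphi$ to justify the Girsanov change of measure as a genuine (not merely local) martingale argument (Step~4). Both are standard under the non-degeneracy of $a$; in the rigorous literature they are handled either by a uniform ellipticity plus decay assumption on the marginals, or by a localization/approximation argument passed to the limit using the entropy bound from Theorem~\ref{prop:solution1}.
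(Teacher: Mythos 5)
Your proposal is correct, and it is a genuine proof, but it takes a route that is quite different from what the paper itself does. The paper does not supply a self-contained proof of Theorem~\ref{thm:SBEsys1} at all: the sentence following the statement defers existence and uniqueness to Fortet, Beurling, Jamison, and L\'eonard, and explains that the paper's own methodology is the one it develops in Appendix~\ref{sec:GSS}, namely showing that the Sinkhorn/Fortet-type iteration \eqref{eq:newmaps} (composed of an inversion, a positive linear kernel map, another inversion, and another kernel map) is strictly contractive in Hilbert's projective metric on a cone of nonnegative functions, then invoking Birkhoff's theorem to get a unique fixed ray. That fixed-point route is constructive, doubles as the numerical algorithm, and is the one the authors generalize to the coffin-state system \eqref{eq:SBsystem} in Theorem~\ref{thm:SBEsys}.

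Your route is the classical probabilistic one (Jamison/F\"ollmer/L\'eonard): import existence and uniqueness of the static optimizer $\pi^\star = f(X_0)g(X_1)R_{01}$ from Theorem~\ref{prop:solution1}, set $\varphi(t,x)=\mE_R[g(X_1)\mid X_t=x]$ and $\hat\varphi(t,x)=R_t(x)\,\mE_R[f(X_0)\mid X_t=x]$, verify the backward/forward Kolmogorov equations, recover the boundary couplings from \eqref{eq:SBsysC1}, read off the drift by Girsanov, and obtain $p_t=\varphi\hat\varphi$ from conditional independence of $X_0$ and $X_1$ given $X_t$. This is conceptually transparent and ties the analytic system directly to the static entropic optimizer, but it leans on the abstract existence/uniqueness of $\pi^\star$ (which is itself a nontrivial theorem) rather than proving things from scratch; the paper's contraction-mapping approach, by contrast, is elementary, quantitative, and algorithmic. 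Two minor points worth tightening in your write-up: in the uniqueness step, what $\pi^\star$ pins down is the pair $(f,g)$ up to the reciprocal scaling (not merely the boundary product $\hat\varphi\varphi$, which is just $\rho_0,\rho_1$ and is given); to argue that an arbitrary solution $(\hat\varphi',\varphi')$ of \eqref{eq:SBsystem1} comes from the same $(f,g)$, you should note that $f'(X_0)g'(X_1)R_{01}$ with $f'=\hat\varphi'(0,\cdot)/R_0$, $g'=\varphi'(1,\cdot)$ is a feasible coupling of product form, hence automatically optimal (the Schr\"odinger system characterizes stationarity of the strictly convex entropy), which forces $\pi'=\pi^\star$. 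And as you already flag, promoting $M_t=\varphi(t,X_t)/\varphi(0,X_0)$ from a local to a true martingale requires an integrability argument; the entropy bound from Theorem~\ref{prop:solution1} is indeed the standard tool for this.
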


Existence and uniqueness of solutions for the Schr\"odinger system have been provided in various degrees of generality by Fortet, Beurling and Jamison \cite{fortet1940resolution,beurling1960automorphism,Jam74,jamison1975markov}. 
For a detailed exposition of the theory of Schr\"odinger's problem we refer to Leonard \cite{Leo14}, in particular, \cite[Theorems 2.8, 2.9, 3.4]{Leo14}. \black
A more recent account along with a proof that is based on the contractiveness of suitable maps in the Hilbert metric was given in \cite{chen2016entropic}. In the present paper, we follow a similar approach as in \cite{chen2016entropic}
when analyzing the more general Schr\"odinger system for diffusions with losses and, therefore, we sketch key steps for this more general case that we consider. An added benefit in recasting the Schr\"odinger system as in 
\eqref{eq:SBsystem1} is that it leads, after discretization, to an efficient algorithm for computing $(\hat\varphi(t,x), \varphi(t,x))$, and thereby, $f,g$ as well as $P^\star$. The discretized version of the Schr\"odinger system \eqref{eq:SBsystem1} amounts to the celebrated Sinkhorn algorithm for matrix scaling \cite{SIREV}. 

\section{Unbalanced stochastic transport}\label{sec:uSBP}
We now analyze stochastic flows between unequal marginals following E.\ Schr\"odinger original rationale that is rooted in large deviations theory. To this end, we consider a diffusion process with killing and seek the closest update of the corresponding prior law that restores agreement with marginal data.

Once again consider the diffusion process \eqref{eq:diffusion} but, this time, with a nonnegative killing rate $V(t,x)$ (assume $V(\cdot,\cdot)$ is continuous and not constantly zero). 
A thought experiment similar to Schr\"odinger's, calls for a large number $N$ of trajectories over a time interval $[0,\,1]$, that are independently sampled from \eqref{eq:diffusion} with initial probability distribution $\rho_0$, and a recorded empirical distribution for the surviving particles at time $t=1$ approximated by $\rho_1$, which is inconsistent with the prior law, that is,
	\[
		\rho_1(\cdot) \neq \int_{\R^n} q(0,x,1,\cdot) \rho_0(x) dx.
	\]
	The kernel $q(0,x,1,y)$ is no longer a probability kernel in that $\int_{\R^n} q(0,x,1,y) dy\neq 1$, in general, and thus, neither $\int_{\R^n} q(0,x,1,\cdot) \rho_0(x) dx$ nor $\rho_1$ are necessarily probability densities, due to killing. In particular, 
$\int \rho_1(x)dx=N_s/N\le 1$ where $N_s$ denotes the number of survival particles at time $1$. Just as in the standard SBP,
we consider continuous distributions, assuming that $N$ is large, and seek to identify the most likely behavior of the particles. By behavior we mean the most likely evolution of the particles along with the most likely times that the particles may have gotten killed (or, absorbed by an underlying medium). 

As in the standard Schr\"odinger bridge, the problem arising from the above thought experiment can be formally stated using the theory of {\em large deviations} \cite{DemZei09}. However, in this case, the space of trajectories needs to be modified to accommodate for possible killing of particles. To this end, we augment the state space  of the diffusion $\R^n$ with a ``coffin state'' $\mathfrak c$, resulting in the state space 
\[
\cX=\R^n\cup \{\mathfrak c\}.
\]
Let $\mathbf{\Omega}=D([0,1],\cX)$ be the Skorokhod space over $\cX$, that is, each element in $\mathbf{\Omega}$ is a c\`adl\`ag over $\cX$ \cite{Bil99}. Denote by $\cP(\mathbf \Omega)$ and $\cP(\cX)$ the spaces of probability distributions over $\mathbf \Omega$ and $\cX$, respectively. Each diffusion process $X_t$ ($t\in[0,1]$) on $\R^n$ with killing corresponds to a process $\bX_t$ taking values in $\mathcal X$, and therby, to a law in $\cP(\mathbf \Omega)$. 

Evidently,
$\cX$ is a Polish space. The space $\mathbf{\Omega}$ of c\`adl\`ag over $\cX$ is thus, with the appropriate topology, also a Polish space \cite{Pol12,EthKur09}. Sanov's theorem applies to measures on Polish spaces and, therefore, the likelihood function is once again expressed in terms of the relative entropy between probability laws.
In our unbalanced SBP setting, the set of probability laws over path space $\cP(\mathbf \Omega)$ that are in alignment with the observations is 
	\[
		\{\bP\in \cP(\mathbf \Omega)\mid \bP_0 = p_0,~\bP_1 = p_1\},
	\]
where $p_0, p_1$ are the natural augmentation of $\rho_0, \rho_1$ so that they belong in $\cP(\cX)$, respectively. Specifically, assuming that $\int_{\mathbb R^n} \rho_1(x)dx=1$, we set 
	\begin{subequations}
	\begin{equation}
		p_0=(\rho_0(\cdot),0)
	\end{equation}
 and 
	\begin{equation} 
		p_1=(\rho_1(\cdot),1-\int_{\mathbb R^n} \rho_1(x)dx).
 	\end{equation}
	\end{subequations}
 
Thus, we arrive at the following recasting of uSBP  as an ordinary SBP.\\

\begin{problem}[Unbalanced Schr\"odinger Bridge Problem (uSBP)] Determine
	\begin{equation}\label{eq:USBP}
		\bP^\star := {\rm arg}\min_{\bP\in \cP(\mathbf \Omega)} \left\{H(\bP\mid \bR)~\mid~ \bP_0 = p_0, \bP_1 = p_1\right\}.
	\end{equation}
\end{problem}

As before, verbatim, $\bR(\cdot) = \int_{\cX^2} \bR^{xy}(\cdot) \bR_{01}(dxdy)$ and
$\bP(\cdot) = \int_{\cX^2} \bP^{xy}(\cdot) \bP_{01}(dxdy)$, where
now $\bR_{01}$ ($\bP_{01}$) denotes the joint marginal distribution of $\bR$ ($\bP$) over the marginal $\bX_{0,1}$,
and $\bR^{xy}$ ($\bP^{xy}$) denotes the law conditioned on $\bX_0 = x\in \mathcal X$ and  $\bX_1 = y\in\mathcal X$.
As before, the relation to the static SBP emerges.\\

\begin{problem} Determine
	\begin{equation}\label{eq:SUSBP}
		\bpi^\star:={\rm arg}\min_{\bpi\in \cP(\cX^2)} \left\{H(\bpi\mid \bR_{01})~\mid~ \bpi_0 = p_0, \bpi_1 = p_1\right\}.
	\end{equation}
\end{problem}

The two formulations are once again equivalent, as it readily follows from the identity $H(\bP\mid \bR ) =  H(\bP_{01} \mid \bR_{01}) + \int_{\cX^2} H(\bP^{xy} \mid \bR^{xy}) \bP_{01}(dxdy)$.\\

\begin{theorem}\label{thm:staticdynamic}
Suppose $\bP^\star$ solves the dynamic uSBP \eqref{eq:USBP}, then $\bP^\star_{01}$ also solves the static uSBP \eqref{eq:SUSBP}. On the other hand, if $\bpi^\star$ solves \eqref{eq:SUSBP}, then setting $\bP^\star=\int_{\cX^2} \bR^{xy}(\cdot) \bpi^\star(dxdy)$ solves \eqref{eq:USBP}, while $\bP^\star_{01}=\bpi^\star$.
\end{theorem}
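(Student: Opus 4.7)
The plan is to mimic the argument used for Theorem~\ref{thm:staticdynamic1} in the ordinary SBP case, the one new ingredient being the careful verification that disintegration of measure is available in the augmented path space. Since $\cX=\R^n\cup\{\mathfrak c\}$ is Polish and the Skorokhod space $\mathbf\Omega=D([0,1],\cX)$ is Polish with its standard topology, any $\bR,\bP\in\cP(\mathbf\Omega)$ admit regular conditional distributions given $(\bX_0,\bX_1)$; this produces the disintegrations $\bR(\cdot)=\int_{\cX^2}\bR^{xy}(\cdot)\,\bR_{01}(dxdy)$ and $\bP(\cdot)=\int_{\cX^2}\bP^{xy}(\cdot)\,\bP_{01}(dxdy)$ that appear in the statement, and leads, by the chain rule for relative entropy, to the identity
\begin{equation}\label{eq:disint_plan}
H(\bP\mid\bR)=H(\bP_{01}\mid\bR_{01})+\int_{\cX^2}H(\bP^{xy}\mid\bR^{xy})\,\bP_{01}(dxdy),
\end{equation}
valid whenever any of the terms is finite. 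I would record this as the first step of the proof.

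Next I would use \eqref{eq:disint_plan} to obtain both implications simultaneously. Because relative entropy is nonnegative, the integral term on the right is $\ge 0$, with equality iff $\bP^{xy}=\bR^{xy}$ for $\bP_{01}$-a.e.\ $(x,y)$. Observe also that the marginal constraints $\bP_0=p_0,\bP_1=p_1$ are equivalent to the constraints $\bpi_0=p_0,\bpi_1=p_1$ on $\bpi=\bP_{01}$, so the feasible sets of the dynamic and static problems are in exact correspondence via $\bP\mapsto\bP_{01}$.

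For the dynamic-to-static direction: given a minimizer $\bP^\star$ of \eqref{eq:USBP}, $\bpi:=\bP^\star_{01}$ is feasible for \eqref{eq:SUSBP} and, by \eqref{eq:disint_plan}, $H(\bpi\mid\bR_{01})\le H(\bP^\star\mid\bR)$. If some other $\tilde\bpi$ were feasible with $H(\tilde\bpi\mid\bR_{01})<H(\bpi\mid\bR_{01})$, then $\tilde\bP:=\int_{\cX^2}\bR^{xy}(\cdot)\,\tilde\bpi(dxdy)$ would be feasible for \eqref{eq:USBP}, would have disintegration $\tilde\bP^{xy}=\bR^{xy}$ (so that the integral term in \eqref{eq:disint_plan} vanishes), and would satisfy $H(\tilde\bP\mid\bR)=H(\tilde\bpi\mid\bR_{01})<H(\bP^\star\mid\bR)$, contradicting optimality of $\bP^\star$. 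Hence $\bP^\star_{01}$ solves the static problem. For the static-to-dynamic direction, starting from an optimal $\bpi^\star$ and defining $\bP^\star=\int_{\cX^2}\bR^{xy}(\cdot)\,\bpi^\star(dxdy)$, we have $\bP^\star_{01}=\bpi^\star$ (feasibility), $\bP^{\star,xy}=\bR^{xy}$, and by \eqref{eq:disint_plan} $H(\bP^\star\mid\bR)=H(\bpi^\star\mid\bR_{01})$; for any competitor $\bP$ feasible in \eqref{eq:USBP}, \eqref{eq:disint_plan} together with the optimality of $\bpi^\star$ in the static problem yields $H(\bP\mid\bR)\ge H(\bP_{01}\mid\bR_{01})\ge H(\bpi^\star\mid\bR_{01})=H(\bP^\star\mid\bR)$.

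The substantive content of the proof is the disintegration identity \eqref{eq:disint_plan}; once this is in hand the remainder is purely algebraic. I therefore expect the only nontrivial step to be justifying the disintegration and the chain rule in this augmented jump-diffusive setting. This is however covered by the general theory of regular conditional probabilities on Polish spaces and the standard chain rule for relative entropy (see, e.g., the references to \cite{Pol12,EthKur09} already invoked in the paper for the Polish character of $\mathbf\Omega$), so no genuinely new estimate is needed.
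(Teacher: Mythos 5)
Your proof is correct and follows essentially the same route as the paper: the paper states that the theorem ``readily follows'' from the disintegration identity $H(\bP\mid \bR ) =  H(\bP_{01} \mid \bR_{01}) + \int_{\cX^2} H(\bP^{xy} \mid \bR^{xy}) \bP_{01}(dxdy)$, which is exactly the identity you derive and then exploit. You merely make explicit the two directions and the Polish-space justification for the existence of regular conditional distributions, both of which the paper leaves implicit.
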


The Radon-Nikodym ratios between solutions and priors for the two problems, analogous to \eqref{eq:nikodym}, applies here too, and the analogous expressions for the Schr\"odinger system in Theorem \ref{prop:solution1} follow as well. More explicitly, the solutions to \eqref{eq:USBP} and \eqref{eq:SUSBP} are of the form
	\begin{equation}
		\bP^\star = f(\bX_0)g(\bX_1)\bR
	\end{equation}
and 
	\begin{equation}
		\bpi^\star = f(\bX_0)g(\bX_1)\bR_{01}
	\end{equation}
respectively.
The divergence between the standard SBP and the present uSBP becomes noticeable when we seek explicit solutions via analogues of system \eqref{eq:SBsystem1} and of the corresponding Fokker-Plank equation in Theorem \ref{thm:SBEsys1}, since now, we need to specify the update on the prior killing rate. We detail this next.

\subsection{Generalized Schr\"odinger system}

The Fokker-Planck equation for a diffusion \eqref{eq:diffusion} with killing rate $V(t,x)$
is
	\begin{equation}\label{eq:FK}
		\partial_t  R_t+ \nabla\cdot(b R_t) + V R_t = \frac{1}{2} \sum_{i,j=1}^n \frac{\partial^2 (a_{ij} R_t)}{\partial x_i\partial x_j}.
	\end{equation}
As before, $a(t,X) = \sigma(t,X)\sigma(t,X)'$ is assumed to be positive definite throughout. The corresponding Schr\"odinger system and its relation to the law of $\bP^\star$ can be expressed after reparametrizing the pair $(f,g)$ of functions on $\mathcal X$ as follows
	\begin{subequations}\label{eq:newparameters}
	\begin{eqnarray}
		f(x)\bR_0(x) &=&
		\begin{cases}
			\hat\varphi(0,x) & \mbox{if}~x\in \R^n
			\\
			\hat\psi(0) & \mbox{if}~x=\mathfrak c,
		\end{cases} 
		\\
		g(y) &=&
		\begin{cases}
			\varphi(1,y) & \mbox{if}~y\in \R^n
			\\
			\psi(1) & \mbox{if}~y=\mathfrak c.
		\end{cases} 
	\end{eqnarray}
	\end{subequations}
	Comparing with \eqref{eq:SBsystem1}, the Schr\"odinger system along with the non-linear coupling constraints now becomes
	\begin{subequations}\label{eq:SBsystem}
	\begin{eqnarray}\label{eq:SBsystema}
		\partial_t \hat\varphi &=& - \nabla\cdot(b \hat\varphi)-V\hat\varphi +\frac{1}{2} \sum_{i,j=1}^n \frac{\partial^2 (a_{ij} \hat \varphi)}{\partial x_i\partial x_j} 
		\\\label{eq:SBsystemb}
		\frac{d\hat\psi}{dt} &=& \int V\hat\varphi(t,x) dx
		\\\label{eq:SBsystemc}
		\partial_t\varphi &=& - b \cdot\nabla\varphi+V\varphi - \frac{1}{2} \sum_{i,j=1}^n a_{ij}\frac{\partial^2 \varphi}{\partial x_i\partial x_j} -V\psi
		\\\label{eq:SBsystemd}
		\frac{d\psi}{dt} &=& 0
		\\\label{eq:SBsysteme}
		\rho_0 &=& \varphi(0,\cdot)\hat\varphi(0,\cdot)
		\\\label{eq:SBsystemf}
		\rho_1 &=& \varphi(1,\cdot)\hat\varphi(1,\cdot)
		\\\label{eq:SBsystemg}
		 \hat\psi(0) &=& 0
		 \\\label{eq:SBsystemh}
		 \psi(1)\hat\psi(1) &=& 1-\int \rho_1.
	\end{eqnarray}
	\end{subequations}

\begin{theorem}\label{thm:SBEsys}
Let $R$ be the law of a diffusion \eqref{eq:diffusion} with nontrivial killing rate $V(t,x)$ and
$a(t,x) = \sigma(t,x)\sigma(t,x)'$ being positive definite for all $(t,x)\in\R\times \R^n$, and assume that $\rho_0,\rho_1$ are absolutely continuous with respect to the Lebesgue measure.
There exists a unique (up to a constant positive scaling)  $4$-tuple $(\hat\varphi(t,x), \hat \psi(t), \varphi(t,x), \psi(t))$ of non-negative functions that satisfies the Schr\"odinger system \eqref{eq:SBsystem}.
\end{theorem}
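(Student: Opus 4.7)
The plan is to adapt the Hilbert-metric contraction argument of \cite{chen2016entropic} to the augmented state space $\cX = \R^n \cup \{\mathfrak c\}$. The first step is to recast the system \eqref{eq:SBsystem} as a fixed-point equation on the cone of positive functions over $\cX$. Using the Feynman-Kac representation for \eqref{eq:SBsystemc}-\eqref{eq:SBsystemd}, and noting that $\psi(t) \equiv \psi$ is constant by \eqref{eq:SBsystemd},
\begin{equation*}
\varphi(0, x) = \mathbb E^x\!\left[\varphi(1, X_1)\, \mathbf 1_{\tau > 1}\right] + \psi\, \mathbb P^x[\tau \le 1] =: (\mathcal B g)(x),
\end{equation*}
where $\tau$ is the killing time of the prior diffusion and $g:\cX\to\R_+$ collects $g|_{\R^n} = \varphi(1, \cdot)$ and $g(\mathfrak c) = \psi$. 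Dually, letting $h := \hat\varphi(0, \cdot)$ (with $\hat\psi(0) = 0$ fixed by \eqref{eq:SBsystemg}), the forward equations \eqref{eq:SBsystema}-\eqref{eq:SBsystemb} define
\begin{equation*}
(\mathcal F h)(y) := \int k(x, y)\, h(x)\, dx, \quad (\mathcal F h)(\mathfrak c) := \int \mathbb P^x[\tau \le 1]\, h(x)\, dx,
\end{equation*}
with $k(x, y)$ the sub-Markov transition density of the killed diffusion. Combining these with the product constraints \eqref{eq:SBsysteme}-\eqref{eq:SBsystemh} yields the iteration map $\mathcal T(g) := p_1 / \mathcal F(\rho_0/\mathcal B g)$ understood pointwise over $\cX$. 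The map $\mathcal T$ is positively homogeneous, so fixed-point rays correspond bijectively to $4$-tuples solving \eqref{eq:SBsystem} modulo the scaling $(\hat\varphi, \hat\psi, \varphi, \psi) \mapsto (c\hat\varphi, c\hat\psi, \varphi/c, \psi/c)$.

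The second step is to show that $\mathcal T$ is a strict contraction in the Hilbert projective metric on the cone of positive functions over $\cX$. Both $\mathcal B$ and $\mathcal F$ are positive linear operators, and Birkhoff's theorem gives a Hilbert-metric contraction coefficient $\tanh(\Delta/4) < 1$ whenever the projective diameter $\Delta$ of the image cone is finite. Standard two-sided Gaussian estimates for $k(x, y)$, under uniform ellipticity of $a(t, x)$ and boundedness of $V$, supply $\Delta < \infty$ on a compact effective domain. The pointwise reciprocal maps $h \mapsto \rho_0/h$ and the analogous time-$1$ step $h \mapsto p_1/h$ are Hilbert-metric isometries, so composing the four steps yields strict contraction.

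The third step is to conclude via the Banach fixed-point theorem in the (complete) projective metric space of positive rays: $\mathcal T$ admits a unique fixed-point ray, from which propagation via $\mathcal B$ and $\mathcal F$ together with the ratio steps reconstruct the $4$-tuple $(\hat\varphi, \hat\psi, \varphi, \psi)$ asserted in the theorem. Uniqueness of the ray is precisely the "unique up to a constant positive scaling" statement, since the ambiguity $g \mapsto cg$ propagates through $\mathcal B$, $\mathcal F$ as the scaling $(\hat\varphi, \hat\psi, \varphi, \psi) \mapsto (c\hat\varphi, c\hat\psi, \varphi/c, \psi/c)$, and leaves all bilinear constraints invariant.

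I expect the main obstacle to be finiteness of the projective diameter on the unbounded domain $\R^n$: since $k(x, y)$ has Gaussian tails the global log-ratio $\sup k / \inf k$ is infinite, and the coffin component $\mathbb P^x[\tau \le 1]$ can additionally degenerate on regions where $V$ is small. This is handled, following \cite{chen2016entropic}, either by assuming $\rho_0, \rho_1$ have compact support so the iteration effectively restricts to a compact sub-cone, or by working with weighted Hilbert metrics tuned to the prior's tails. A secondary technical point, unique to the present setting, is to verify that neither component of the iterate degenerates along the orbit; this uses the standing hypothesis that $V$ is not identically zero together with $\int\rho_1 \le 1$, which keeps both the $\R^n$- and $\mathfrak c$-components of $g$ strictly positive and hence the Hilbert distance well-defined at every step.
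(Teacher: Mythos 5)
Your approach matches the paper's in its essentials: express \eqref{eq:SBsystem} as a fixed-point equation for a positively homogeneous map on a cone of nonnegative functions over $\cX$, establish strict contraction in the Hilbert projective metric via Birkhoff's theorem, and handle the unbounded domain by first restricting to compact supports $S_0,S_1$ and then lifting as in \cite{chen2016entropic}. Where you differ is the decomposition of the iteration, and your version is actually cleaner on one delicate point. The paper factors the cycle as $\mathcal E_4\circ\mathcal E_3\circ\mathcal E_2\circ\mathcal E_1$ on $\cK_0(\cX)$, where $\mathcal E_2$ is a positive linear map whose projective diameter is \emph{infinite}: its coffin output $\psi(0)=c_1/\hat\psi(1)$ depends only on the scalar coffin input, so the image degenerates and Birkhoff alone gives nothing. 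The paper works around this by replacing the coffin output with the surrogate $\tilde\psi(0)=\varphi(0,z)$ for a fixed $z\in S_0$, then observing that the substitution $\mathcal E_2\to\mathcal E_2'$ does not alter the four-fold composition because $\mathcal E_4$ never reads that coordinate ($\hat\psi(0)=0$). Your asymmetric routing $\cK_0(\cX)\to\cK_0(\R^n)\to\cK_0(\cX)$ through $\mathcal B$ and $\mathcal F$ avoids this entirely: $\mathcal B$ returns only $\varphi(0,\cdot)$ on $\R^n$, so both linear legs have uniformly bounded projective diameter from the same two-sided kernel estimates $\alpha_i\le q,r\le\beta_i$ on the compact set, and the interleaved ratio maps $h\mapsto\rho_0/h$, $h\mapsto p_1/h$ are Hilbert isometries. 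That is the same inequality the paper uses, arrived at without the $\mathcal E_2'$ patch.

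There is, however, a gap at the conclusion. Strict contraction in the projective metric yields a unique fixed \emph{ray}, i.e., $\mathcal T g^\star=\lambda g^\star$ for some $\lambda>0$; your step three jumps from this to reconstructing a $4$-tuple satisfying \eqref{eq:SBsysteme}--\eqref{eq:SBsystemh}. But a fixed ray only gives $\rho_1=\lambda\,\varphi(1,\cdot)\hat\varphi(1,\cdot)$ and $c_1=\lambda\,\psi(1)\hat\psi(1)$, so you must separately show $\lambda=1$. This is not automatic, and the paper devotes a distinct argument to it, refactoring $\cC=\cE^\dagger\circ\cE_{p_0}\circ\cE\circ\cE_{p_1}$ and using $\langle u,\cE_{p_0}(u)\rangle=\langle\cE_{p_1}(u),u\rangle=1$. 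Equivalently, a mass-balance computation closes it: pushing $\rho_0=\varphi(0)\hat\varphi(0)$ through the adjoint kernel gives $\int\rho_0=\int\varphi(1)\hat\varphi(1)+\psi(1)\hat\psi(1)=\lambda^{-1}\bigl(\int\rho_1+c_1\bigr)=\lambda^{-1}$, hence $\lambda=1$ since $\int\rho_0=1$. Without some such step your reconstruction solves a $\lambda$-rescaled boundary-value problem rather than \eqref{eq:SBsystem}. The remaining points you flag (compactness of the supports, and positivity of both components of the iterate, which indeed uses $V\not\equiv 0$ and $\int\rho_1<1$) are genuine but secondary, and are handled in the paper exactly as you anticipate.
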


The proof of the theorem, given in Appendix \ref{sec:GSS}, is based on the contractiveness of the iterative scheme that consists in alternating between evaluation of
$(\hat\varphi(1,\cdot),\hat\psi(1))$ from $(\varphi(0,\cdot),\psi(0))$ using (\ref{eq:SBsysteme}-\ref{eq:SBsystema}-\ref{eq:SBsystemg}-\ref{eq:SBsystemb}), and then evaluating $(\varphi(0,\cdot),\psi(0))$ in a followup cycle from $(\hat\varphi(1,\cdot),\hat\psi(1))$ using the backward in time integration, via the remaining equations. Specifically, we prove that the iteration 
    \begin{align}\label{eq:newmaps}
            {\hat\varphi(1,\cdot)\choose{\hat\psi(1)}}
       			{\mapsto} 
       {\varphi(0,\cdot) \choose \psi(0)}
       			{\mapsto}
      {\hat\varphi(1,\cdot)\choose \hat\psi(1)}_{\rm next}
    \end{align}
is strictly contractive in the Hilbert metric.  

As in the ordinary SBP the discretized Schr\"odinger system \eqref{eq:SBsystem} leads to an  efficient algorithm to compute $(\hat\varphi(t,x), \hat \psi(t), \varphi(t,x), \psi(t))$, and thereby, $\bP^\star$ as well as the corresponding Fokker-Planck equation for the corresponding marginals, that is explained next. 

\subsection{Dynamic formulation}

In general, a multiplicative transformation such as $\bR\to \bP^\star=f(\bX_0)g(\bX_1)\bR$, preserves the Markovian character. Moreover, the generators of the respective semi-groups that relate in this way, herein, $\L_t^\bR,\L_t^{\bP^\star}$, can be evaluated from one another directly by utilizing the multiplicative factors and the so-called carr\'e du champ operator
\[
\Gamma_t(u,v):=\L_t(uv)-u\L_t(v)-v\L_t(u).
\]
Specifically (see \cite[Equation (3.6)]{Leo14}, and also \cite{leonard2011stochastic,revuz2013continuous}),
\begin{equation}\label{eq:3_6}
\L_t^{\bP^\star} u (x)= \L_t^{\bR}u (x) + \Gamma_t^{\bR}(g_t,u)(x)/g_t(x),
\end{equation}
where
	\begin{equation}
	g_t(y) =
		\begin{cases}
			\varphi(t,y) & \mbox{if}~y\in \R^n
			\\
			\psi(t) & \mbox{if}~y=\mathfrak c.
		\end{cases} 
	\end{equation}

In light of Theorem \ref{thm:SBEsys} we now establish an explicit characterization of the dynamic unbalanced Schr\"odinger bridge problem \eqref{eq:USBP}. We denote by $P_t$ the marginal of $\bX_t$ restricted to the first component in $\mathcal X$, and by $q_t$ the probabilty of the coffin state. Thus, we use the vectorial notation
\[
 \bP_t=:(P_t,q_t).
\]
Accordingly, for the marginals $\bR_t=(R_t,s_t)$ of the prior, $R_t$ satisfies the Fokker-Planck equation \eqref{eq:FK} while $s_t=1-\int_{\mathbb R^n}R_t(x)dx$. The solution $\bP^\star$ to \eqref{eq:USBP} is then characterized by the following theorem.

\begin{theorem}
The solution $\bP^\star$ to \eqref{eq:USBP} corresponds to a diffusion process
	\begin{equation}\label{eq:controldiffusion}
		dX_t = (b(t,X_t) +a(t,X_t)\nabla\log\varphi(t,X_t))dt + \sigma(t,X_t) dW_t
	\end{equation}
with killing rate $\psi V/\varphi$, where $\varphi$ is obtained from the solution of the generalized Schr\"odinger system \eqref{eq:SBsystem}. Accordingly,
	\begin{equation}\label{eq:newFP}
		 \partial_t P_t + \nabla \cdot((b+ a\nabla \log \varphi)P_t) = \frac{1}{2} \sum_{i,j=1}^n \frac{\partial^2 (a_{ij} P_t)}{\partial x_i\partial x_j} - \frac{\psi}{\varphi} V P_t.
	\end{equation}
\end{theorem}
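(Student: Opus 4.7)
The plan is to leverage Theorem \ref{thm:SBEsys}, which supplies a $4$-tuple $(\hat\varphi,\hat\psi,\varphi,\psi)$ and the product form $\bP^\star=f(\bX_0)g(\bX_1)\bR$ via the reparametrization \eqref{eq:newparameters}, and then to read off the new drift and killing rate from the generator transformation rule \eqref{eq:3_6}. Since $\bP^\star$ is a multiplicative $h$-transform of a Markov law conditioned on its initial distribution, it is again Markovian, so the whole task reduces to evaluating the carr\'e du champ of $\bR$ applied to the time-dependent function $g_t$ given by $\varphi(t,\cdot)$ on $\R^n$ and by $\psi(t)$ at the coffin state $\mathfrak c$.

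First, I would record the infinitesimal generator of $\bR$ on the augmented Polish state space $\cX=\R^n\cup\{\mathfrak c\}$: for a smooth bounded $u:\cX\to\R$,
$$\L_t^{\bR}u(x)=b(t,x)\cdot\nabla u(x)+\tfrac{1}{2}\sum_{i,j}a_{ij}(t,x)\,\partial_i\partial_j u(x)+V(t,x)\bigl(u(\mathfrak c)-u(x)\bigr),\quad x\in\R^n,$$
with $\L_t^{\bR}u(\mathfrak c)=0$; the term with intensity $V$ encodes killing as a jump into $\mathfrak c$. A direct product-rule computation, combined with the identity $u(\mathfrak c)v(\mathfrak c)-uv-u(v(\mathfrak c)-v)-v(u(\mathfrak c)-u)=(u(\mathfrak c)-u)(v(\mathfrak c)-v)$ that collapses the cross terms from the jump piece, yields
$$\Gamma_t^{\bR}(u,v)(x)=(\nabla u)^{\!\top}a(\nabla v)+V(t,x)\bigl(u(\mathfrak c)-u(x)\bigr)\bigl(v(\mathfrak c)-v(x)\bigr).$$

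Substituting $g_t$ and dividing by $g_t(x)=\varphi(t,x)$ splits the contribution into two transparent pieces: the diffusive piece $a\nabla\log\varphi\cdot\nabla u$, which shifts the drift by $a\nabla\log\varphi$, and the jump piece $V(\psi-\varphi)(u(\mathfrak c)-u)/\varphi$, which when added to the original killing contribution $V(u(\mathfrak c)-u)$ in $\L_t^{\bR}u$ combines to a new jump intensity $V+V(\psi-\varphi)/\varphi=\psi V/\varphi$. Invoking \eqref{eq:3_6} thus gives
$$\L_t^{\bP^\star}u(x)=(b+a\nabla\log\varphi)\cdot\nabla u+\tfrac{1}{2}\sum_{i,j}a_{ij}\,\partial_i\partial_j u+\frac{\psi V}{\varphi}\bigl(u(\mathfrak c)-u(x)\bigr),$$
which is the generator of the diffusion \eqref{eq:controldiffusion} on $\R^n$ with killing rate $\psi V/\varphi$. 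The Fokker-Planck equation \eqref{eq:newFP} is then the standard forward equation for this killed diffusion, with the $-(\psi/\varphi)V P_t$ term reflecting the new killing rate; the boundary consistency of the marginals is built in through the relations \eqref{eq:SBsysteme}-\eqref{eq:SBsystemh}, which in particular give $p_t=\varphi\hat\varphi$ and $q_t=\psi\hat\psi$.

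The main obstacle is the bookkeeping on the augmented state space: one must carefully set up $\L_t^\bR$ and $\Gamma_t^\bR$ so that the diffusive and jump contributions are handled jointly, and in particular treat the coffin-state value of $g_t$ as the spatially constant $\psi(t)$ rather than as a pointwise function. Once this is in place, the identification of the new drift and killing rate is a one-line consequence of \eqref{eq:3_6}, and verifying \eqref{eq:newFP} is then routine.
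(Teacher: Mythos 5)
Your proof is correct and follows essentially the same route as the paper: write down the generator of $\bR$ on the augmented state space $\cX$, compute the carr\'e du champ, and apply \eqref{eq:3_6} to read off the new drift $b + a\nabla\log\varphi$ and killing rate $\psi V/\varphi$. The only cosmetic difference is that you present the carr\'e du champ in the compact factored form $a\nabla u\cdot\nabla v + V\bigl(u(\mathfrak c)-u\bigr)\bigl(v(\mathfrak c)-v\bigr)$ rather than the paper's expanded componentwise form --- the underlying algebra and conclusion coincide.
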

\begin{proof} The generator of $\bR$, is of the form
	\begin{equation}
		\L^\bR_t \;:\;
 \left[\begin{matrix} \varphi \\ \psi \end{matrix} \right] \mapsto \left[\begin{matrix} b\cdot \nabla \varphi -V \varphi + \frac{1}{2} \sum_{i,j=1}^n a_{ij}\frac{\partial^2 \varphi}{\partial x_i\partial x_j}+V\psi  \\ 0 \end{matrix} \right].
	\end{equation}
The carr\'e du champ operator becomes
	\begin{equation}
		\Gamma^\bR_t (\left[\begin{matrix} \varphi_1 \\ \psi_1 \end{matrix} \right], \left[\begin{matrix} \varphi_2 \\ \psi_2 \end{matrix} \right])
		=\left[\begin{matrix} V\varphi_1\varphi_2 + a \nabla\varphi_1\cdot \nabla\varphi_2 + V \psi_1\psi_2 - V \psi_1 \varphi_2 -V \varphi_1 \psi_2 \\ 0 \end{matrix} \right].
	\end{equation}
We readily obtain that
	\begin{eqnarray}\label{eq:componentwise}
		\L^{\bP^\star}_t  \left[\begin{matrix} v\\ \eta \end{matrix} \right] &=& \L^\bR_t   \left[\begin{matrix} v\\ \eta \end{matrix} \right] + \Gamma^\bR_t  (\left[\begin{matrix} \varphi \\ \psi \end{matrix} \right], \left[\begin{matrix} v \\ \eta \end{matrix} \right])/\left[\begin{matrix} \varphi \\ \psi \end{matrix} \right]
		\\&=&\nonumber
		\left[\begin{matrix} b\cdot \nabla v -V v + \frac{1}{2} \sum_{i,j=1}^n a_{ij}\frac{\partial^2 v}{\partial x_i\partial x_j}+V\eta + V v +a \nabla \log \varphi \cdot \nabla v + V \frac{\psi}{\varphi} \eta - V \frac{\psi}{\varphi} v- V\eta\\ 0 \end{matrix} \right]
		\\&=&\nonumber
		\left[\begin{matrix} (b+a \nabla \log \varphi) \cdot \nabla v- \frac{\psi}{\varphi} V v+ \frac{1}{2} \sum_{i,j=1}^n a_{ij}\frac{\partial^2 v}{\partial x_i\partial x_j}  + \frac{\psi}{\varphi} V  \eta \\ 0 \end{matrix} \right],
	\end{eqnarray}
	where the division in \eqref{eq:componentwise} is carried out componentwise.
The generator is that of the diffusion process \eqref{eq:controldiffusion} with killing rate $\psi V/\varphi$ over the extended state space $\cX$. The Fokker-Planck equation \eqref{eq:newFP} can be obtained by taking the dual of $\L^{\bP^\star}_t$.
\end{proof}

\begin{theorem}
The marginal distribution $\bP_t^\star$ on the first component of $\mathcal X$ is $P_t=\varphi(t,\cdot) \hat\varphi(t,\cdot)$, and on the second component of $\mathcal X$ is $q_t = \psi(t)\hat\psi(t)$.
\end{theorem}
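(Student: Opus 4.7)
The strategy is to leverage the multiplicative factorization $\bP^\star = f(\bX_0)g(\bX_1)\bR$ from the preceding Schr\"odinger-system characterization, and to compute the time-$t$ marginal by conditioning on $\bX_t$ and invoking the Markov property of the reference law $\bR$. For any bounded test function $\phi$ on $\mathcal X$,
\begin{equation*}
\mathbb E^{\bP^\star}[\phi(\bX_t)] = \mathbb E^{\bR}\!\left[\phi(\bX_t)\,\mathbb E^{\bR}[f(\bX_0)\mid\bX_t]\,\mathbb E^{\bR}[g(\bX_1)\mid\bX_t]\right],
\end{equation*}
since $\bX_0$ and $\bX_1$ are conditionally independent given $\bX_t$ under the Markov measure $\bR$. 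Hence $\bP^\star_t$ factors, at each point of $\mathcal X$, as the product of a forward factor and a backward factor, with the prior marginal absorbed into the forward factor.

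The plan is then to define
\begin{equation*}
\hat\varphi(t,x) := \mathbb E^{\bR}[f(\bX_0)\mid \bX_t=x]\,R_t(x),\quad \hat\psi(t) := \mathbb E^{\bR}\!\left[f(\bX_0)\,\mathbf 1_{\{\bX_t=\mathfrak c\}}\right],
\end{equation*}
\begin{equation*}
\varphi(t,x) := \mathbb E^{\bR}[g(\bX_1)\mid \bX_t=x],\quad \psi(t) := \mathbb E^{\bR}[g(\bX_1)\mid \bX_t=\mathfrak c],
\end{equation*}
and verify that this quadruple solves the generalized Schr\"odinger system \eqref{eq:SBsystem}. The function $\hat\varphi$ is the density of the sub-probability $f\rho_0$ propagated by the prior semi-group with killing, which in weak form is precisely \eqref{eq:SBsystema}; the absorbing-state bookkeeping yields $d\hat\psi/dt=\int V\hat\varphi\,dx$ with $\hat\psi(0)=0$, matching \eqref{eq:SBsystemb} and \eqref{eq:SBsystemg}. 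For $\varphi$, I would split $\mathbb E^{\bR}_{t,x}[g(\bX_1)]$ into the surviving-trajectories contribution and the already-killed contribution, invoke Feynman--Kac for the former to obtain a backward Kolmogorov equation with potential $+V\varphi$, and collect the constant-in-$t$ value $\psi(1)$ picked up upon death to produce the inhomogeneity $-V\psi$ in \eqref{eq:SBsystemc}; absorption at $\mathfrak c$ forces $d\psi/dt=0$, i.e.\ \eqref{eq:SBsystemd}.

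The boundary identities \eqref{eq:SBsysteme}--\eqref{eq:SBsystemh} can then be read off from the definitions and the reparametrization \eqref{eq:newparameters}: for instance, $\varphi(0,x)\hat\varphi(0,x) = \mathbb E^{\bR}[g(\bX_1)\mid \bX_0=x]\,f(x)R_0(x)$ is the density of $\bP^\star_0$, which equals $\rho_0(x)$ by the marginal constraint in \eqref{eq:USBP}, with the analogous computation at $t=1$ together with $\psi(1)\hat\psi(1)=1-\int\rho_1$. The main obstacle, apart from careful coffin-state bookkeeping, is to confirm that this constructed quadruple coincides with the one produced by the system \eqref{eq:SBsystem}; this is immediate from the uniqueness-up-to-scaling established in Theorem \ref{thm:SBEsys}. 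Once identified, the conditional-independence factorization above yields $P_t=\varphi(t,\cdot)\hat\varphi(t,\cdot)$ on $\mathbb R^n$ and $q_t=\psi(t)\hat\psi(t)$ at the coffin state.
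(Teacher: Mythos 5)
Your proof is correct, but it follows a genuinely different route from the paper's. The paper's argument is a direct PDE verification: it assumes the preceding theorem has already identified $\bP^\star$ as a diffusion with drift $b+a\nabla\log\varphi$ and killing rate $\psi V/\varphi$ (hence the Fokker--Planck equation \eqref{eq:newFP}), and then simply substitutes the candidate $P_t := \varphi\hat\varphi$ into \eqref{eq:newFP}, checking via \eqref{eq:SBsystema} and \eqref{eq:SBsystemc} that the equation is satisfied; the coffin mass $q_t=\psi\hat\psi$ is checked analogously from \eqref{eq:SBsystemb} and \eqref{eq:SBsystemd}. Your argument instead works probabilistically from the Radon--Nikodym factorization $\bP^\star=f(\bX_0)g(\bX_1)\bR$ and the Markov property: conditional independence of $\bX_0,\bX_1$ given $\bX_t$ under $\bR$ immediately factors the time-$t$ marginal, the factors are identified as conditional expectations, and these are then shown to satisfy the generalized Schr\"odinger system, with uniqueness up to scaling closing the loop. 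What your route buys is a self-contained derivation that does not lean on the generator/carr\'e-du-champ computation of the preceding theorem, and it delivers the Doob-$h$-transform interpretation $\varphi(t,x)=\mE^{\bR}[g(\bX_1)\mid\bX_t=x]$, $\hat\varphi(t,x)=\mE^{\bR}[f(\bX_0)\mid\bX_t=x]R_t(x)$ (and their coffin-state analogues) explicitly; the price is that you must re-derive the forward/backward evolution equations for these conditional expectations, including the Feynman--Kac argument with the inhomogeneity $-V\psi$ and the coffin bookkeeping, which the paper sidesteps by taking the Schr\"odinger system as given. One small remark: the final appeal to Theorem \ref{thm:SBEsys} is needed only to identify \emph{your} quadruple with \emph{the} quadruple the theorem statement refers to; since the products $\varphi\hat\varphi$ and $\psi\hat\psi$ are invariant under the constant rescaling, this identification is indeed harmless.
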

\begin{proof}
We verify that $P_t$ as above satisfies the Fokker-Planck equation associated with the diffusion \eqref{eq:controldiffusion} with killing rate $\psi V/\varphi$. 
To this end, let
 $P_t(\cdot):= \varphi(t,\cdot) \hat\varphi(t,\cdot)$. Then by \eqref{eq:SBsystema} and \eqref{eq:SBsystemc} we obtain
	\begin{eqnarray*}
		0&=&\partial_t P_t + \nabla \cdot((b+ a\nabla \log \varphi)P_t) - \frac{1}{2} \sum_{i,j=1}^n \frac{\partial^2 (a_{ij} P_t)}{\partial x_i\partial x_j} + \hat\varphi \psi V 
		\\&=& \partial_t P_t + \nabla \cdot((b+ a\nabla \log \varphi)P_t) - \frac{1}{2} \sum_{i,j=1}^n \frac{\partial^2 (a_{ij} P_t)}{\partial x_i\partial x_j} + \frac{\psi}{\varphi} V P_t,
	\end{eqnarray*}
which is exactly the desired Fokker-Planck equation \eqref{eq:newFP}. 
Similarly, by  \eqref{eq:SBsystemb} and \eqref{eq:SBsystemd},
	\[
		\frac{dq_t}{dt} = \psi(t) \int V\hat\varphi(t,x)dx = \int \frac{\psi}{\varphi} V P_t dx, 
	\]
which is consistent with $P_t$ and the new killing rate ${\psi}V/{\varphi}$.
\end{proof}

\section{Fluid dynamic formulation}\label{sec:fluid}
The original Schr\"odinger bridge problem, when there is no killing,  is known to be equivalent to the stochastic control problem of minimizing control energy subject to the marginal two end-point constraints \cite{CheGeoPav14e}, or equivalently, to a fluid dynamic formulation whereby the velocity field $u(t,\cdot)$ effecting the flow minimizes this action integral, namely,
	\begin{subequations}\label{eq:fluid}
	\begin{eqnarray}
		\min_{P_t(\cdot), u(t,\cdot)} && \int_0^1 \int_{\R^n} \frac{1}{2} \|u(t,x)\|^2 P_t dx dt
		\\&& \partial_t  P_t+ \nabla\cdot((b+\sigma u) P_t) - \frac{1}{2} \sum_{i,j=1}^n \frac{\partial^2 (a_{ij} P_t)}{\partial x_i\partial x_j}=0
		\\&& P_0 = \rho_0,\quad P_1=\rho_1.
	\end{eqnarray}
	\end{subequations}
The optimization takes place over the {\em feedback control policy}-{\em flow field} $u(t,x)$ together with the corresponding density flow $P_t(x)$. Below, in this section, we derive an analogous 
formulation for the Schr\"odinger bridge problems with unbalanced marginals.

Along the flow, the killing rate may deviate from the prior $V$ and is to be determined. To quantify the deviation of the posterior killing rate from the prior, we introduce an entropic cost
 inside the action integral, to penalize changes in the ratio $\alpha(t,x)$ between the posterior and the prior killing rate. That is, $\alpha$ is an added optimization variable which is  $\alpha(t,x)\ge 0$, and with the posterior killing rate being $\alpha V$. To penalize differences between the posterior and the prior killing rates
we introduce the factor
	\begin{align}\label{eq:entropiccost}
		&\alpha \log \alpha -\alpha +1
	\end{align}
inside the action integral, which is convex and achieves the minimal value $0$ at $\alpha =1$. This entropy cost has been used in \cite{Leo14,Leo16} to study Schr\"odinger bridge problem over graphs. It is associated with the large deviation principle for continuous-time Markov chain with discrete state. Combining this entropic cost term for the ratio of killing rates with \eqref{eq:fluid} we arrive at
	\begin{subequations}\label{eq:fluidkilling}
	\begin{eqnarray}\label{eq:fluidkillinga}
		\min_{P, u,\alpha} && \int_0^1 \int_{\R^n} \left[\frac{1}{2} \|u(t,x)\|^2 P_t +(\alpha \log \alpha -\alpha +1)VP_t \right]dx dt
		\\&& \partial_t  P_t+ \nabla\cdot((b+\sigma u) P_t) +\alpha VP_t- \frac{1}{2} \sum_{i,j=1}^n \frac{\partial^2 (a_{ij} P_t)}{\partial x_i\partial x_j}=0 \label{eq:fluidkillingb}
		\\&& P_0 = \rho_0,\quad P_1=\rho_1.\label{eq:fluidkillingc}
	\end{eqnarray}
	\end{subequations}
Note that the control strategy has now two components, a drift term $u(t,x)$ and a correcting term $\alpha(t,x)$ for the killing rate.\\
\begin{theorem}
Let $(\hat\varphi(t,x), \hat \psi(t), \varphi(t,x), \psi(t))$ be the solution to the Schr\"odinger system \eqref{eq:SBsystem}, then the solution to \eqref{eq:fluidkilling} is given by the choice
	\begin{subequations}\label{eq:thm12}
	\begin{eqnarray}
		u^\star(t,x) &=& \sigma (t,x)'\nabla \log \varphi(t,x)
		\\
		\alpha^\star(t,x) &=& \frac{\psi(t)}{\varphi(t,x)}\\
		P_t(x)&=&\varphi(t,x)\hat\varphi(t,x).
	\end{eqnarray}
	\end{subequations}
\end{theorem}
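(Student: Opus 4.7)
The plan is to establish optimality via Lagrangian weak duality, using the Schr\"odinger potential $\lambda(t,x) := \log\varphi(t,x) - \log\psi(t)$ from \eqref{eq:SBsystem} as a dual certificate and showing that the candidate in \eqref{eq:thm12} saturates the resulting lower bound. First I would confirm primal feasibility: substituting $u^\star = \sigma(t,x)'\nabla\log\varphi$ and $\alpha^\star = \psi/\varphi$ into \eqref{eq:fluidkillingb} reproduces exactly \eqref{eq:newFP}, whose solution with the prescribed endpoint marginals is $P_t = \varphi(t,\cdot)\hat\varphi(t,\cdot)$ by the preceding theorem, and the boundary data \eqref{eq:fluidkillingc} then follow from \eqref{eq:SBsysteme}--\eqref{eq:SBsystemf}.

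Next I would adjoin the PDE constraint \eqref{eq:fluidkillingb} to the objective \eqref{eq:fluidkillinga} with multiplier $\lambda(t,x)$ and integrate by parts in $x$ and $t$ to transfer all derivatives onto $\lambda$. Pointwise minimization of the resulting integrand in the controls is elementary: the quadratic-in-$u$ piece is minimized at $u = \sigma'\nabla\lambda$, and the strictly convex function $V[(\alpha\log\alpha - \alpha + 1) + \alpha\lambda]$ of $\alpha \geq 0$ is minimized at $\alpha = e^{-\lambda}$. Substituting these back produces, for every feasible $(u,\alpha,P)$, the weak-duality inequality
\[
J[u,\alpha,P] \;\geq\; \int_{\R^n}\!\lambda(1,x)\rho_1(x)\,dx \;-\; \int_{\R^n}\!\lambda(0,x)\rho_0(x)\,dx \;+\; \int_0^1\!\!\int_{\R^n}\!\! \cH[\lambda]\, P_t\, dx\,dt,
\]
with $\cH[\lambda] := -\partial_t\lambda - b\cdot\nabla\lambda - \tfrac{1}{2}\|\sigma'\nabla\lambda\|^2 - \tfrac{1}{2}\sum_{i,j}a_{ij}\partial^2_{ij}\lambda + V(1 - e^{-\lambda})$, and with equality iff $(u,\alpha) = (\sigma'\nabla\lambda,\,e^{-\lambda})$ pointwise wherever $P_t > 0$.

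The last step is to verify $\cH[\lambda] \equiv 0$ for $\lambda := \log(\varphi/\psi)$, using \eqref{eq:SBsystemc} together with $\dot\psi = 0$ from \eqref{eq:SBsystemd}. Since $\psi$ depends only on $t$, we have $\nabla\lambda = \nabla\log\varphi$ and $\partial_t\lambda = \partial_t\varphi/\varphi$; the pointwise identity $\partial^2_{ij}\log\varphi = \varphi^{-1}\partial^2_{ij}\varphi - \varphi^{-2}\partial_i\varphi\,\partial_j\varphi$ converts the Laplacian-like term of $\cH$ and generates the $-\tfrac{1}{2}\|\sigma'\nabla\log\varphi\|^2$ that precisely cancels the quadratic contribution, while \eqref{eq:SBsystemc} supplies the residual $V - V\psi/\varphi = V(1 - e^{-\lambda})$. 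Hence $\cH[\lambda] = 0$, the inequality above becomes an equality at the candidate, and reading off the optimal controls gives $u^\star = \sigma'\nabla\log\varphi$ and $\alpha^\star = e^{-\lambda} = \psi/\varphi$, in agreement with \eqref{eq:thm12}.

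The main obstacle is structural rather than computational: \eqref{eq:fluidkilling} is not jointly convex in the raw variables $(u,\alpha,P)$, so a bare stationarity/KKT argument would not by itself deliver a \emph{global} minimum. The weak-duality presentation above sidesteps this by producing a lower bound valid for \emph{any} feasible triple. Equivalently, one can lift to currents $(m,j,P) := (uP,\,\alpha V P,\,P)$, whereupon the kinetic action $|m|^2/P$ and the reaction entropy $j\log(j/(VP)) - j + VP$ become perspectives of convex functions --- hence jointly convex --- and the constraint becomes affine in $(m,j,P)$, placing the problem within the standard Benamou--Brenier convex framework. Remaining technicalities, namely positivity of $\varphi,\psi$ so that $\log(\varphi/\psi)$ is well defined, and enough decay of $P_t$ to justify the integration by parts, follow respectively from Theorem \ref{thm:SBEsys} and mild regularity hypotheses on $\rho_0,\rho_1$.
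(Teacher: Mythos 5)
Your computational core is exactly the paper's: adjoin the continuity constraint \eqref{eq:fluidkillingb} with multiplier $\lambda$, integrate by parts, minimize pointwise in $u$ and $\alpha$ to obtain $u=\sigma'\nabla\lambda$ and $\alpha=e^{-\lambda}$, arrive at the Hamilton--Jacobi--Bellman--type equation \eqref{eq:lambda}, and check that $\lambda=\log(\varphi/\psi)$ satisfies it using \eqref{eq:SBsystemc} together with $\dot\psi=0$. Where you diverge is in the logical packaging. The paper explicitly claims only that \eqref{eq:thm12} makes the Lagrangian stationary, while you recast the same identities as a weak-duality certificate: the post-minimization Lagrangian gives a lower bound on the cost for every feasible $(u,\alpha,P)$, and after verifying $\cH[\lambda]\equiv 0$ the bound collapses to a constant determined only by the boundary data, which the candidate attains. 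This is a genuine strengthening, since stationarity alone does not certify a global minimum when the functional is viewed in the raw variables $(u,\alpha,P)$. Your closing observation --- that the change of variables $(m,j,P):=(uP,\alpha VP,P)$ turns the kinetic term into the perspective of $\|\cdot\|^2/2$ and the reaction term into the perspective of $r\mapsto r\log r - r + 1$, with an affine continuity constraint --- is the cleanest way to see joint convexity in the Benamou--Brenier spirit and would tighten the published argument. Same computations, more careful conclusion; the only things left implicit on your end (positivity of $\varphi,\psi$, integrability for the integration by parts) are the same ones the paper leaves implicit and are supplied by Theorem~\ref{thm:SBEsys} and mild regularity.
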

\begin{proof} 
We verify that conditions \eqref{eq:thm12} ensure stationarity of the Lagrangian for \eqref{eq:fluidkilling}. 
Introducing the Lagrange multiplier $\lambda(t,x)$,  the Lagrangian for \eqref{eq:fluidkilling} is
	\begin{align*}
		\cL = &\int_0^1 \int\left[\frac{1}{2} \|u\|^2 P_t +(\alpha \log \alpha -\alpha +1)VP_t \right.\\
		&\left.+ \lambda \left(\partial_t  P_t+ \nabla\cdot((b+\sigma u) P_t) +\alpha VP_t- \frac{1}{2} \sum_{i,j=1}^n \frac{\partial^2 (a_{ij} P_t)}{\partial x_i\partial x_j}\right)\right]dx dt.
	\end{align*}
Applying integration by part we obtain
	\begin{align}\nonumber
		\cL =& \int_0^1 \int\left[\frac{1}{2} \|u\|^2 P_t +(\alpha \log \alpha -\alpha +1)VP_t - P_t\partial_t\lambda -\nabla\lambda\cdot (b+\sigma u) P_t +\alpha V\lambda P_t
		\right.\\
		&\left.- \frac{1}{2} \sum_{i,j=1}^n a_{ij} \frac{\partial^2 \lambda}{\partial x_i\partial x_j}P_t\right]dx dt
		+ \int \lambda (1,x) P_1(x) dx - \int \lambda(0,x) P_0(x) dx. \label{eq:Lagrangian}
	\end{align}
Minimizing the above over $u$ yields
	\begin{subequations}\label{eq:optcontrol}
	\begin{equation}
		u^\star(t,x) = \sigma' \nabla \lambda.
	\end{equation}
Similarly, minimization over $\alpha$ yields
	\begin{equation}
		\alpha^\star(t,x) = \exp (-\lambda).
	\end{equation}
	\end{subequations}
Substituting \eqref{eq:optcontrol} into \eqref{eq:Lagrangian} we obtain
	\begin{align*}
		\cL =& \int_0^1\int P_t\left(-\frac{1}{2} a \nabla \lambda \cdot \nabla\lambda- b\cdot \nabla\lambda-\partial_t \lambda- \frac{1}{2} \sum_{i,j=1}^n a_{ij} \frac{\partial^2 \lambda}{\partial x_i\partial x_j} +V(1-\exp(-\lambda))\right) dxdt
		\\&+\int \lambda (1,x) P_1(x) dx - \int \lambda(0,x) P_0(x) dx.
	\end{align*}
The optimality condition
	\begin{equation}\label{eq:lambda}
		\partial_t \lambda+ b\cdot \nabla\lambda+ \frac{1}{2} \sum_{i,j=1}^n a_{ij} \frac{\partial^2 \lambda}{\partial x_i\partial x_j} +\frac{1}{2} a \nabla \lambda \cdot \nabla\lambda-V(1-\exp(-\lambda)) = 0
	\end{equation}
follows. Now, let 
	\begin{equation}\label{eq:lambdaphi}
		\lambda (t,x) = \log \frac{\varphi(t,x)}{\psi(t)},
	\end{equation}
then \eqref{eq:lambda} becomes 
	\begin{equation}
		\partial_t \varphi - \frac{d\psi}{dt} \frac{\varphi}{\psi} + b\cdot \nabla \varphi -  V\varphi + \frac{1}{2} \sum_{i,j=1}^n a_{ij}\frac{\partial^2 \varphi}{\partial x_i\partial x_j} +V\psi =0,
	\end{equation}
and by setting $d\psi/dt=0$, the above reduces to \eqref{eq:SBsystemc}. Finally, plugging \eqref{eq:lambdaphi} into \eqref{eq:optcontrol} yields \eqref{eq:thm12}.
\end{proof}

Substituting the optimal control \eqref{eq:thm12} into \eqref{eq:fluidkillingb} yields the closed loop dynamics under optimal control strategy. Clearly, it is the same as \eqref{eq:newFP} associated with the solution $\bP^\star$ to the uSBP \eqref{eq:USBP}.

\section{SBP over reweighted processes}\label{sec:killing}

Some early attempts to formulate the Schr\"odinger Bridge Problem for diffusions with losses date back to Nagasawa and Wakolbinger \cite{nagasawa1990stochastic,wakolbinger1989simplified}. These focused on processes that are suitably reweighed via a Feynman-Kac multiplicative functional to model losses. Earlier relevant work on Schr\"odinger Bridges over reweighed processes includes \cite{nagasawa1990stochastic,wakolbinger1989simplified,Bla92,dawson1990schrodinger,aebi1992large,leonard2011stochastic,CheGeoPav14c}. In particular, e.g., \cite[Section 8]{wakolbinger1989simplified}, and more recently, \cite{leonard2011stochastic} discuss Feynman-Kac reweighing of the prior measure $R$, into $f(X_0)\exp\left(-\int_{0}^1V(t,X_t)dt\right)g(X_1)R$. Such a process, with this special Radon-Nikodym derivative, is referred to as the $h$-transform of $R$. 
To distinguish this prior work from our uSBP formulation, we refer to the earlier formulation as {\em SBP over reweighted processes}.

Let $\hat\rho_1$ be a normalized version of $\rho_1$ so that $\hat \rho_1$ is a probability distribution, then the classical Schr\"odinger bridge problem over reweighted processes can be formulated as
	\begin{equation}\label{eq:SBPK}
		\min_{P\in\cP(\Omega)} \left\{H(P\mid \hat R )~\mid~ P_0 = \rho_0,\; P_1 = \hat\rho_1\right\},
	\end{equation}
where 
	\begin{equation}
		\hat R = \exp\left(-\int_{0}^1V(t,X_t)dt\right)R
	\end{equation}
is the (unnormalized) distribution induced by the survival trajectories of the diffusion process \eqref{eq:diffusion} with killing rate $V$. 
The solution to this problem reads
	\begin{equation}
		P^\star=f(X_0)g(X_1)\hat R=f(X_0)\exp\left(-\int_{0}^1V(t,X_t)dt\right)g(X_1)R
	\end{equation}
where the two multipliers $f, g$ are chosen such that $P^\star$ satisfies the constraints $P_0 = \rho_0, P_1 = \hat \rho_1$. 
These two multipliers can again be obtained by solving a Schr\"odinger system. More specifically, let $\varphi, \hat\varphi$ be the solution to
	\begin{subequations}
	\begin{eqnarray}
		\partial_t \hat\varphi &=& - \nabla\cdot(b \hat\varphi) - V\hat\varphi+\frac{1}{2} \sum_{i,j=1}^n \frac{\partial^2 (a_{ij} \hat \varphi)}{\partial x_i\partial x_j} 
		\\
		\partial_t\varphi &=& - b \cdot\nabla\varphi + V\varphi- \frac{1}{2} \sum_{i,j=1}^n a_{ij}\frac{\partial^2 \varphi}{\partial x_i\partial x_j}
		\\
		\rho_0 &=& \varphi(0,\cdot)\hat\varphi(0,\cdot)
		\\
		\hat \rho_1 &=& \varphi(1,\cdot)\hat\varphi(1,\cdot),
	\end{eqnarray}
	\end{subequations}
then $\varphi, \hat \varphi$ relate to $f,g$ as
	\begin{subequations}
	\begin{align}
		\hat\varphi(0,x)&=f(x)\hat R_0(x)\\
		\varphi(1,y)&=g(y).
		\end{align}
	\end{subequations}
	
Unlike the solution $\bP^\star$ to the uSBP \eqref{eq:USBP}, the solution $P^\star$ to \eqref{eq:SBPK} is a probability measure over $\Omega = C([0,1], \R^n)$. Indeed, it is associated with the diffusion process 
	\[
		dX_t = (b(t,X_t) +a(t,X_t)\nabla\log\varphi(t,X_t))dt + \sigma(t,X_t) dW_t
	\]
without losses. The marginal distribution of it equals $P_t = \varphi(t,\cdot) \hat\varphi(t,\cdot)$ and is a probability measure over $\R^n$ for all $t\in [0,\,1]$. We argue that the SBP over weighted process doesn't address Schr\"odinger's orginal problem as described in the thought experiment in Section \ref{sec:uSBP}. The prior $\hat R$ describes the distribution of the surviving trajectories and the problem \eqref{eq:SBPK} can be interpreted as finding the most likely evolution of surviving trajectories that are compatible with the two marginals $\rho_0, \hat\rho_1$. However, the mechanism of how the particles that did not survive got killed is completely ignored in this formulation. 

The importance of explicitly considering a possible update of the killing rate becomes salient when the end-point marginals are consistent with the prior law. Such a case highlights a dichotomy between our formulation of uSBP, and the rationale behind SBP over reweighted processes
To see this, consider a scenario where the two marginals are already consistent with the prior law, that is
	\[
		\rho_1(\cdot) = \int_{\R^n} q(0,x,1,\cdot) \rho_0(x) dx.
	\]
One would expect the solution to be the prior $\hat R$, {\em since the prior is consistent} with the end-point marginals. {\em This is, however, not the case!} Indeed, $\hat R_0$ represents the distribution at $t=0$ of those particles that are destined to survive, and this differs from $\rho_0$, the distribution of all particles. Thus, $\hat R_0$ is not the solution to \eqref{eq:SBPK}.

One could attempt to modify Schr\"odinger's thought experiment by postulating that $\rho_0$ is precisely the distribution at $t=0$ of those particles that eventually survive. With this modification, it is easy to see that the prior $\hat R$ solves \eqref{eq:SBPK}. {\em This modification, however, is not physical}: It is not possible to measure at time $t=0$ the marginal of the survival trajectories! 
	
Finally, we note that the Schr\"odinger bridge problem over reweighted processes has the following fluid dynamic (stochastic control) formulation
	\begin{subequations}\label{eq:fluidV}
	\begin{eqnarray}
		\min_{P_t(\cdot), u(t,\cdot)} && \int_0^1 \int_{\R^n} [\frac{1}{2} \|u(t,x)\|^2+V(t,x)] P_t dx dt
		\\&& \partial_t  P_t+ \nabla\cdot((b+\sigma u) P_t) - \frac{1}{2} \sum_{i,j=1}^n \frac{\partial^2 (a_{ij} P_t)}{\partial x_i\partial x_j}=0
		\\&& P_0 = \rho_0,\quad P_1=\hat\rho_1.
	\end{eqnarray}
	\end{subequations}
This stochastic control problem is over the diffusion process without losses
	\[
		dX_t = b(t,X_t)dt +\sigma(t,X_t)u(t,X_t)dt + \sigma(t,X_t) dW_t,
	\]
and the control $u(t,x)$ only enters the system through the drift. The prior killing rate $V$ serves as a cost term.
This is substantially different from the control formulation \eqref{eq:fluidkilling} of the uSBP where the control has a drift term $u(t,x)$ and a correcting term $\alpha(t,x)$, and the killing rate $V$ appears in the dynamics instead of the cost function.

\section{Numerical example}\label{sec:example}

We conclude by highlighting the uSBP formalism with an academic/numerical example. To this end, we consider the diffusion process
	\[
		dX_t= \sigma dW_t,
	\]
with $X_t,W_t\in \mathbb R$ (i.e., in a $1$-dimensional state space), $\sigma=0.05$, and 
killing rate
	\[
		V(t,x) \equiv 1.
	\]
	We work out the solution of the unbalanced Schr\"odinger bridge problem (uSBP) with initial marginal density
	\[
		\rho_0(x) =
        \begin{cases}
        {0.3-0.3\cos(3\pi x)} & \text{if}~ 0\le x<2/3\\
        {2.4-2.4\cos(6\pi x-4\pi)} & \text{if}~ 2/3\le x\le 1,
        \end{cases}
	\]
and target marginal density 
    \[
        \rho_1(x)= s \rho_0(1-x),
    \]
where $s\le 1$ denotes the percentage of survival particles. 

Figures \ref{fig:SBP} and  \ref{fig:muSBP} display the marginal flow of the uSBP for different values of $s$. 
When $s<1$, only a portion of the particles survive until the end and many particles vanish along the way. Thus, the total mass of the particles is a decreasing function of time, as can be seen from Figure \ref{fig:uSBP}. Note that the terminal percentage of surviving particles is consistent with the chosen value for $s$, in each case.
\begin{figure}\begin{center}
    \includegraphics[width=0.40\textwidth]{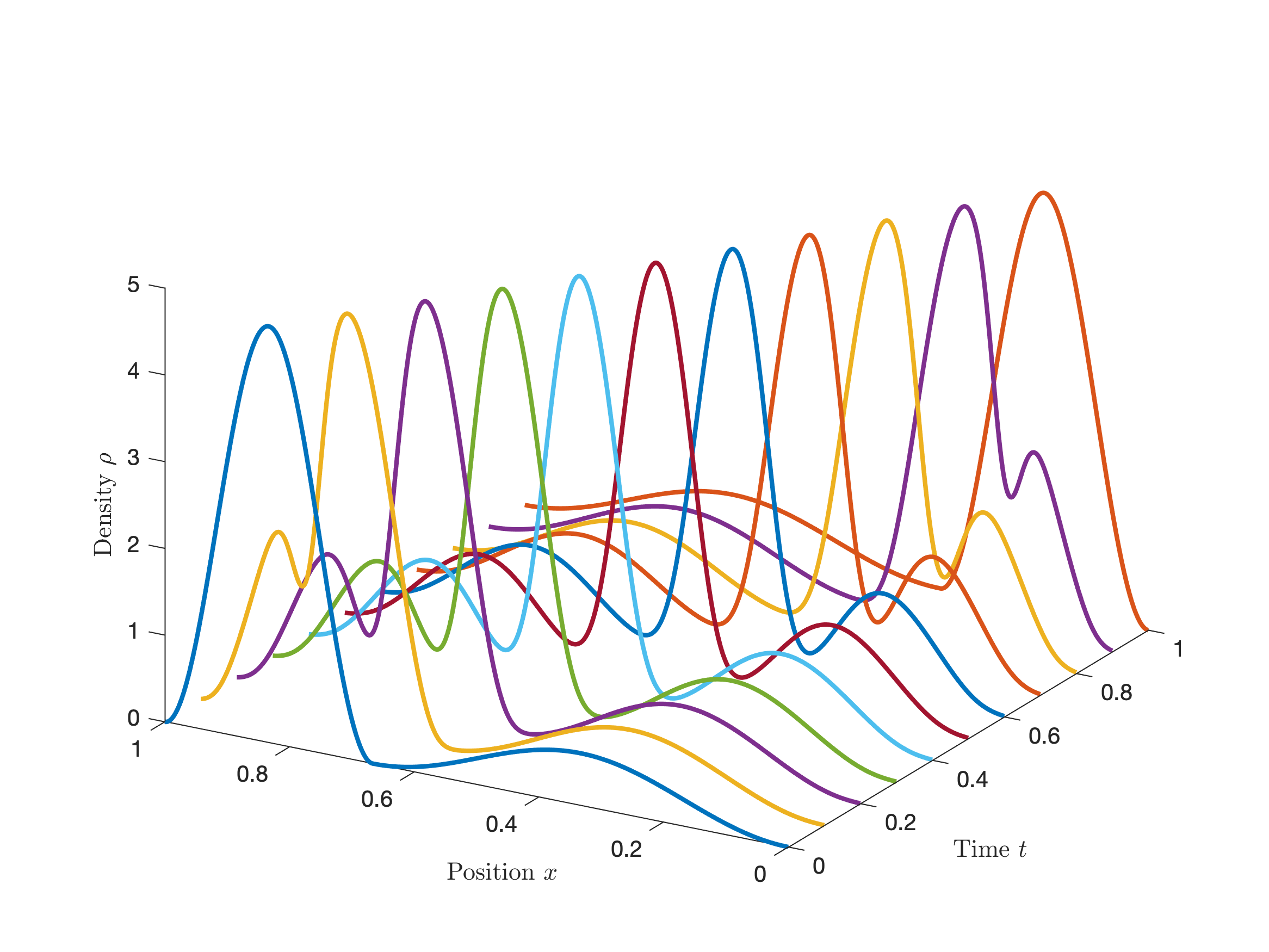}
    \caption{Marginal flow of uSBP for $s=1$}
    \label{fig:SBP}
\end{center}\end{figure}

\begin{figure}
     \centering
     \begin{subfigure}[b]{0.32\textwidth}
         \centering
         \includegraphics[width=\textwidth]{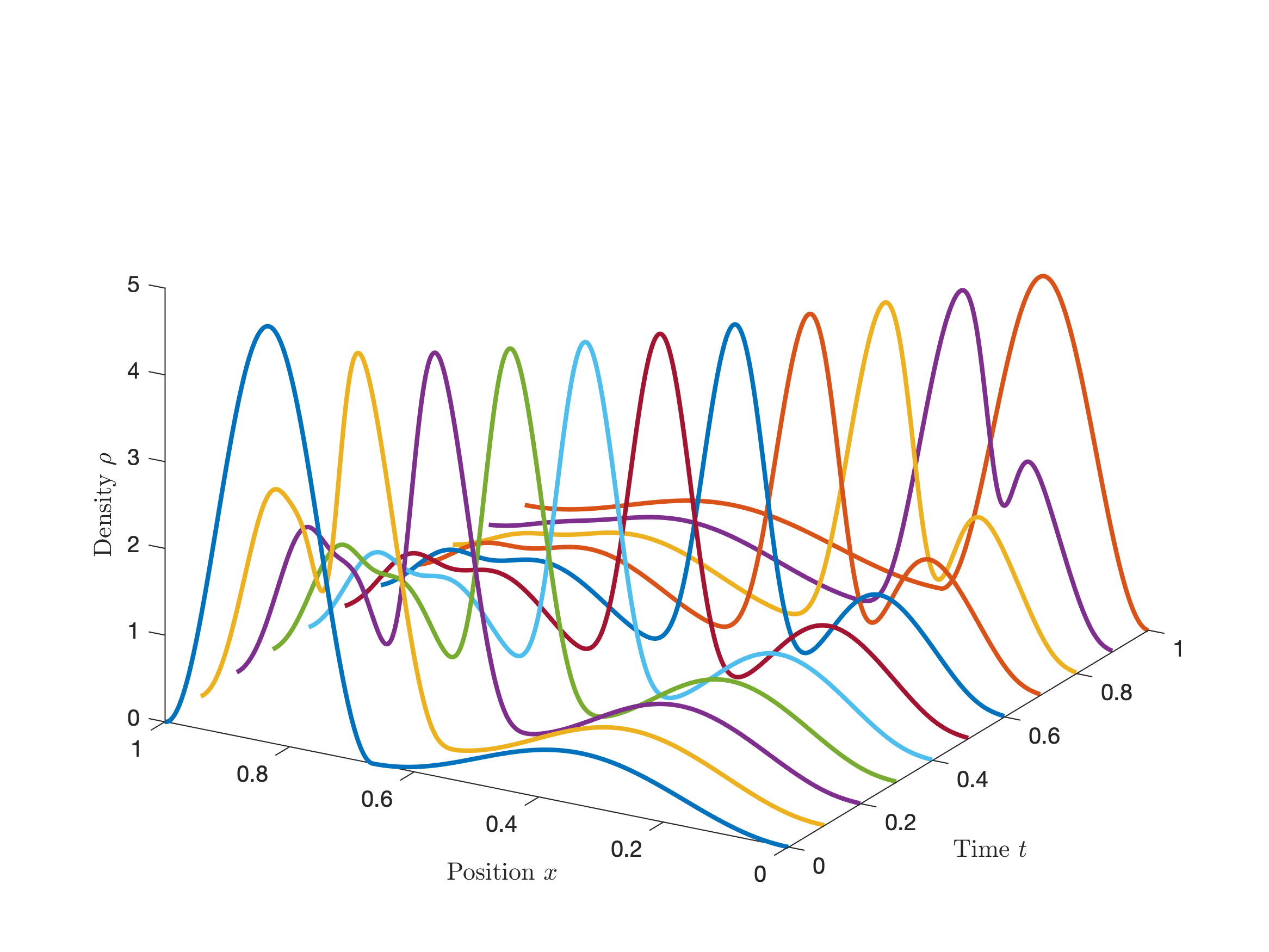}
         \caption{$s=0.8$}
         \label{fig:ms8}
     \end{subfigure}
     \hfill
     \begin{subfigure}[b]{0.32\textwidth}
         \centering
         \includegraphics[width=\textwidth]{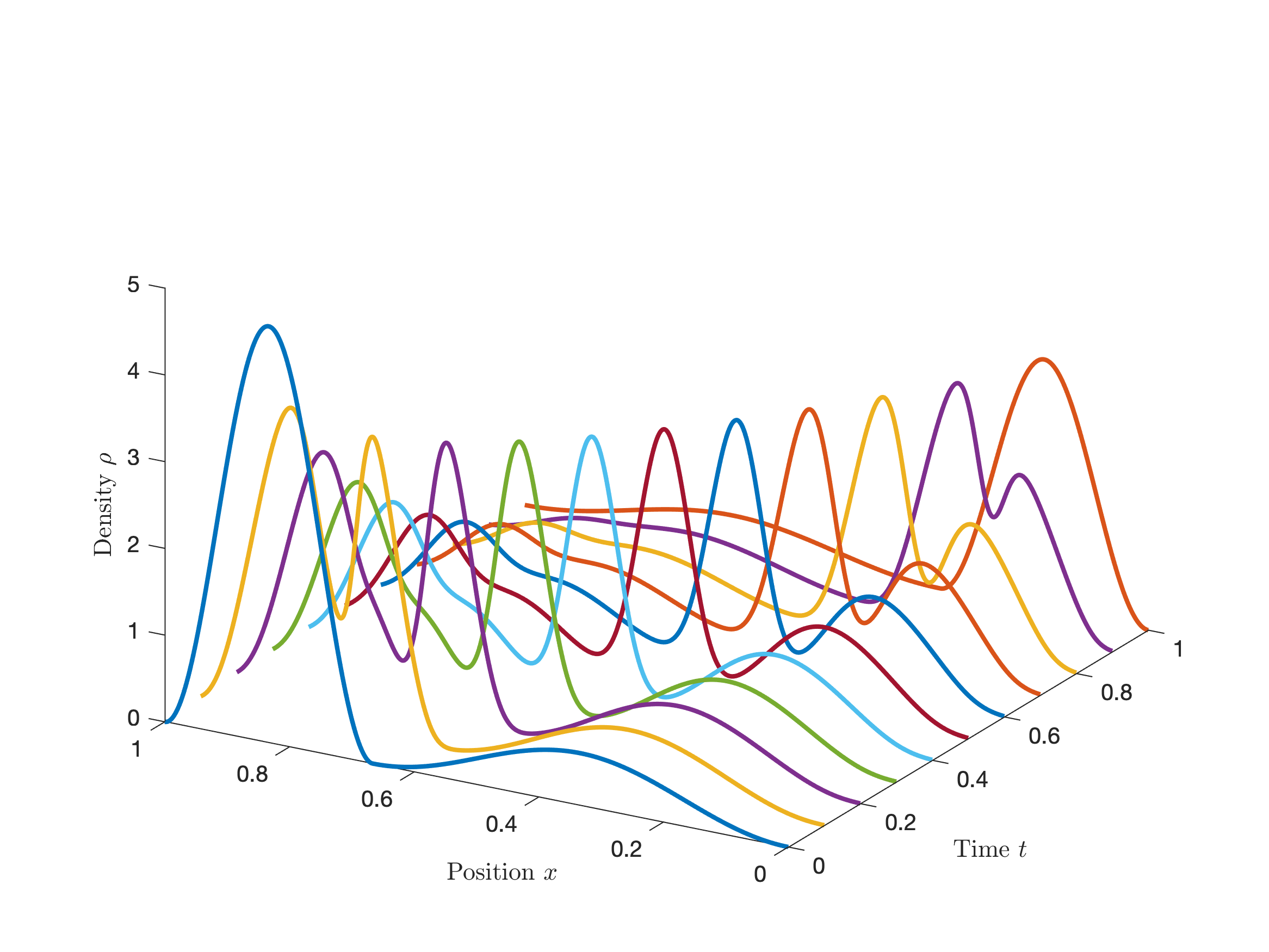}
         \caption{$s=0.6$}
         \label{fig:ms6}
     \end{subfigure}
     \hfill
     \begin{subfigure}[b]{0.32\textwidth}
         \centering
         \includegraphics[width=\textwidth]{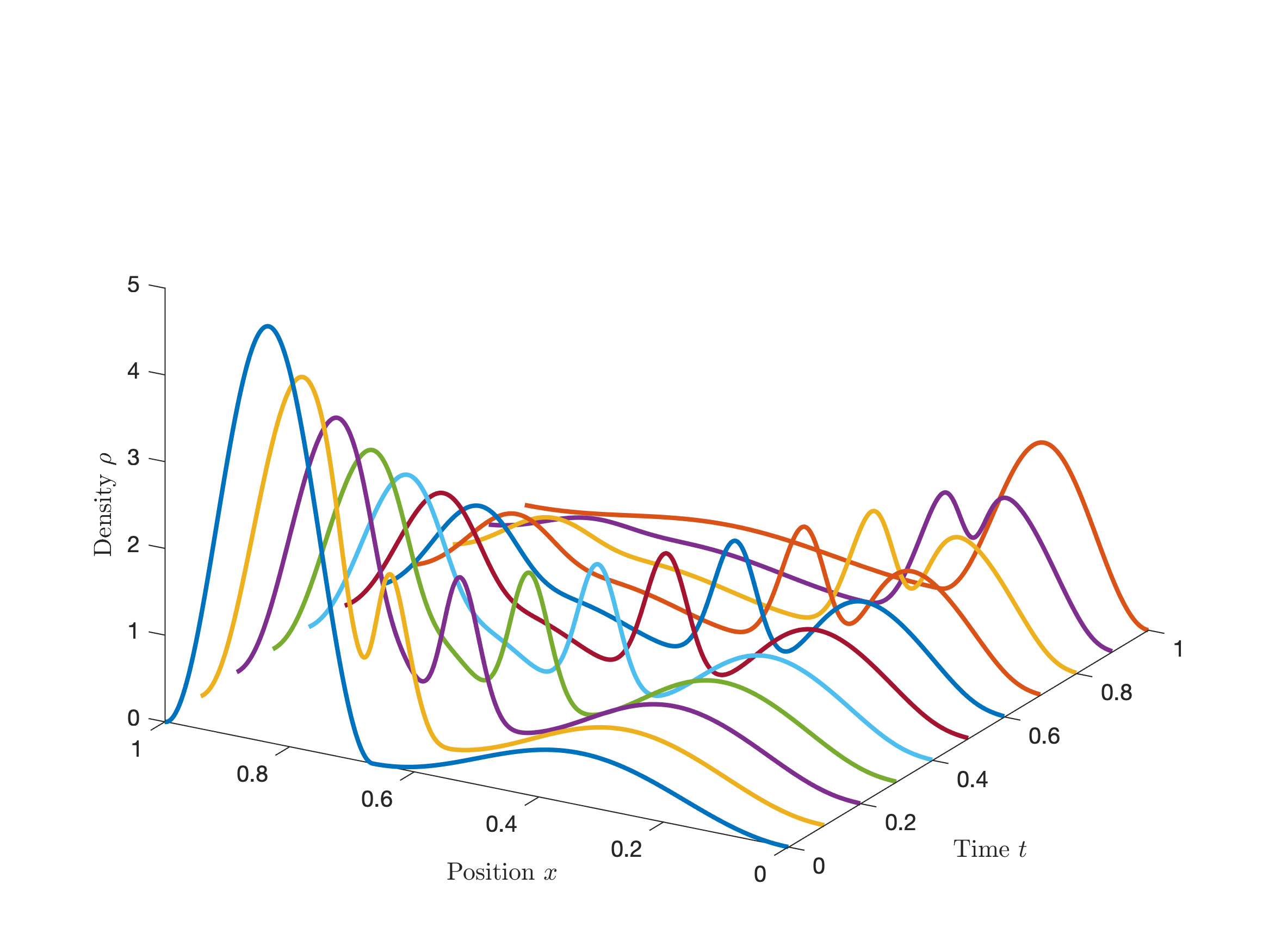}
         \caption{$s=0.4$}
         \label{fig:ms4}
     \end{subfigure}
        \caption{Marginal flow of uSBP for $s=0.8, 0.6, 0.4$}
        \label{fig:muSBP}
\end{figure}

\begin{figure}
     \centering
     \begin{subfigure}[b]{0.32\textwidth}
         \centering
         \includegraphics[width=\textwidth]{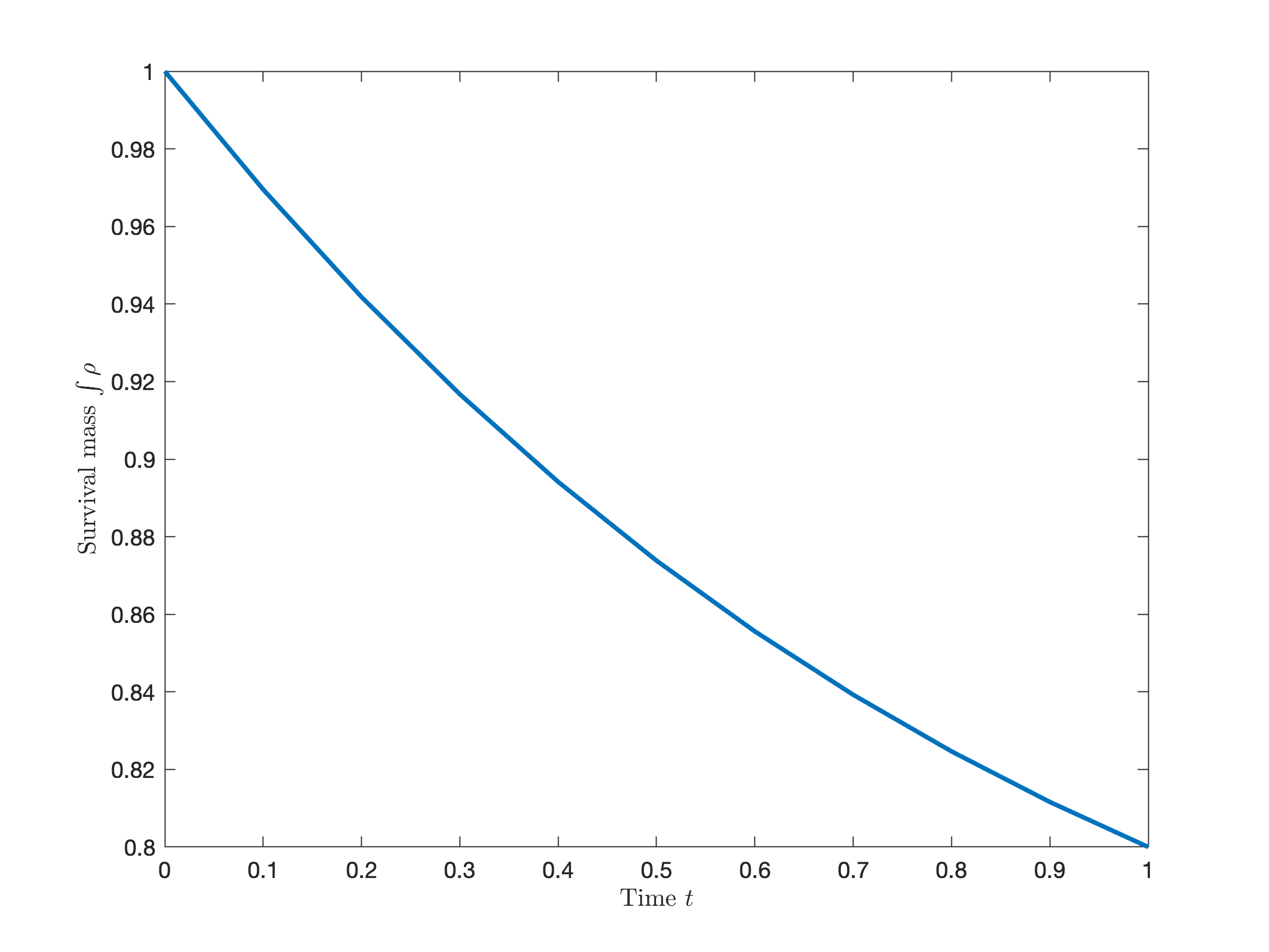}
         \caption{$s=0.8$}
         \label{fig:s8}
     \end{subfigure}
     \hfill
     \begin{subfigure}[b]{0.32\textwidth}
         \centering
         \includegraphics[width=\textwidth]{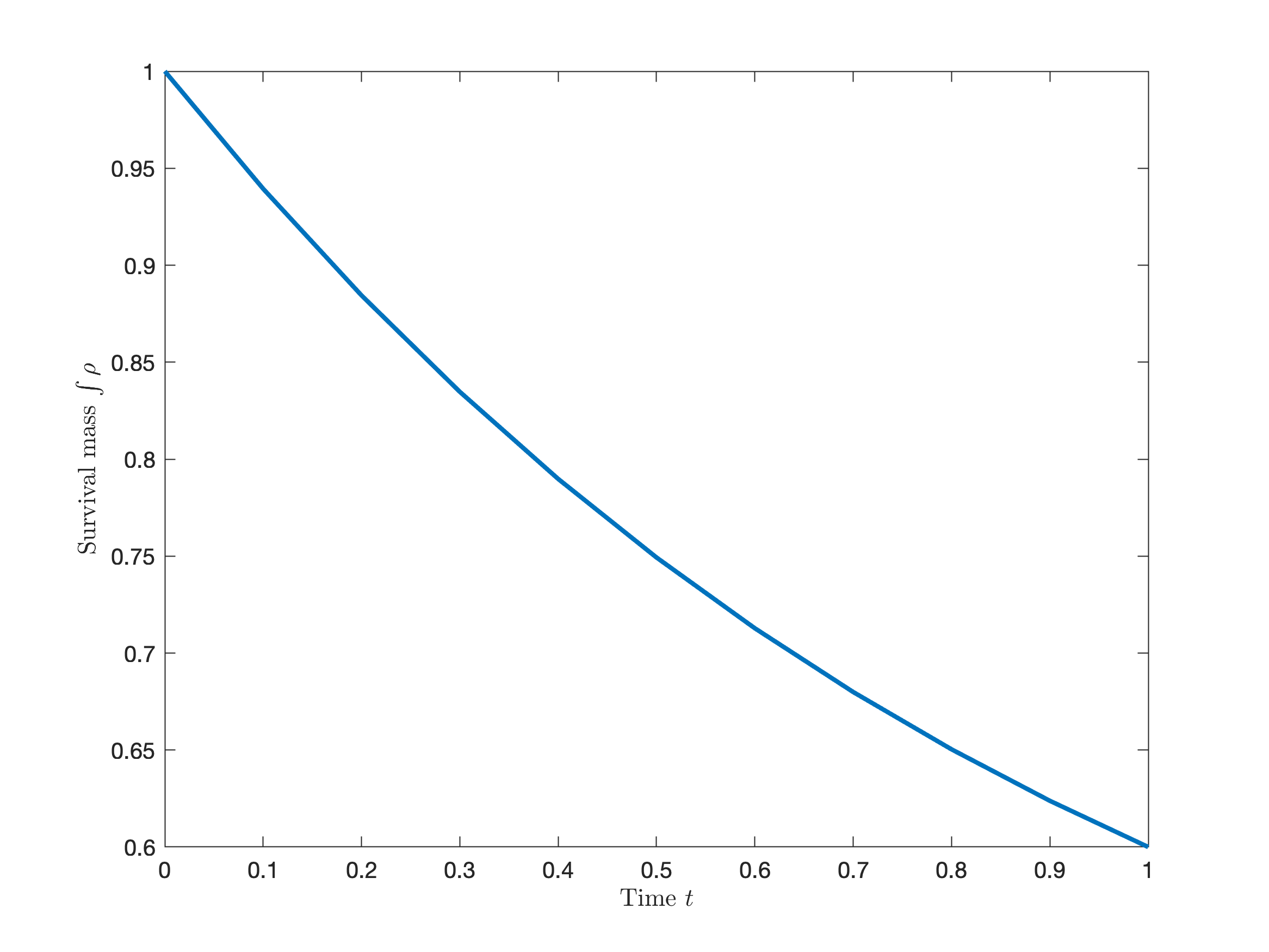}
         \caption{$s=0.6$}
         \label{fig:s6}
     \end{subfigure}
     \hfill
     \begin{subfigure}[b]{0.32\textwidth}
         \centering
         \includegraphics[width=\textwidth]{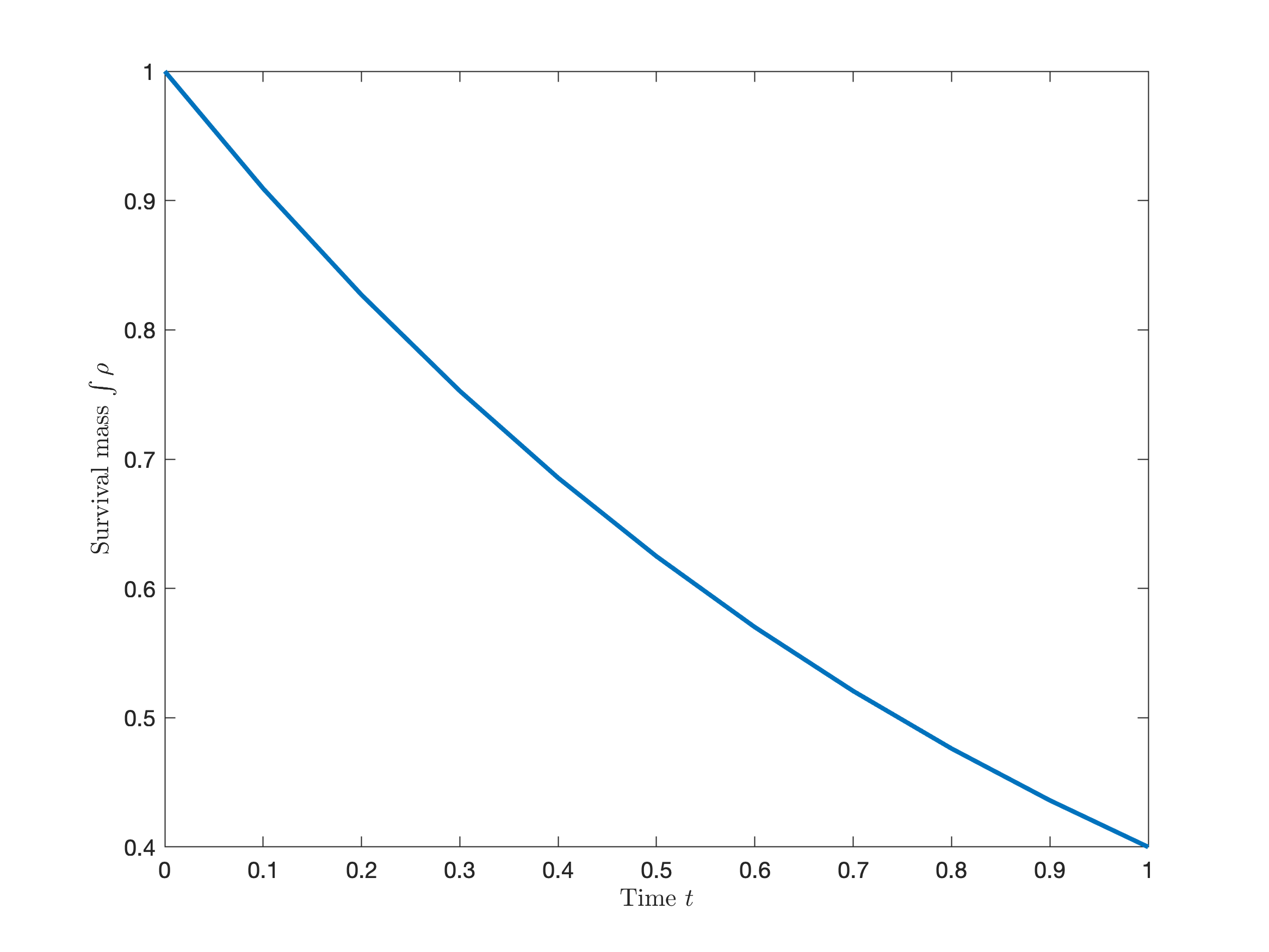}
         \caption{$s=0.4$}
         \label{fig:s4}
     \end{subfigure}
        \caption{Survival mass of uSBP for $s=0.8, 0.6, 0.4$}
        \label{fig:uSBP}
\end{figure}
For comparison, we also display the solution to the SBP over reweighted processes in Figure \ref{fig:reweighted}. Note that its solution is independent of $s$. The solution describes the posterior distribution of the survived particles only, and thus the marginal flow remains a probability measure at all times. In fact, it coincides with uSBP for $s=1$. 
\begin{figure}\begin{center}
    \includegraphics[width=0.40\textwidth]{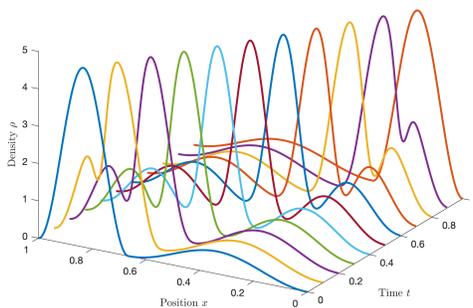}
    \caption{Marginal flow of SBP over reweighted processes}
    \label{fig:reweighted}
\end{center}\end{figure}

\section{Concluding remarks}\label{sec:conclusion}
We introduced Schr\"odinger bridges between unbalanced marginals in the spirit of E.\ Schr\"odinger's original rationale (that led to the standard SBP), aimed to reconcile a given prior law, that now includes a killing rate, with marginal observations.
We formulated the problem as maximum entropy problem over an augmented state space that includes a coffin state representing the state of vanishing particles. The solution is characterized by a Schr\"odinger-type system, different to the classical one, that yields a diffusion process whose drift as well as killing rate are suitable adjusted as compared to the prior. Just like in the standard SBP, this new unbalanced Schr\"odinger bridge problem (uSBP) can be formulated as a stochastic control problem. Naturally, departing from the standard SBP, the control variable in this control formulation includes both the drift and killing rate. We underscore an apparent dichotomy between our formulation of the uSBP and earlier work on SBP over reweighted processes with Feynman-Kac functionals. Though both pertain to SBP's for diffusions with losses, we argued that our uSBP is a natural formulation in the spirit of Schr\"odinger's original quest to reconcile probabilistic models with observations.
The nature of the zero-noise limit of the uSBP and its relation to a corresponding optimal transport flow between unbalanced marginals is left as a topic of future research.

\section{Appendix A: Hilbert's projective metric}\label{sec:appendix}

Herein we discuss Hilbert's projective metric and highlight some important contraction theorems due to Garrett Birkoff and P. J.\ Bushell \cite{birkhoff1957extensions,bushell1973projective,bushell1973hilbert} that we use in this work.  A first application of the Birkhoff-Bushell contractive maps  to scaling of nonnegative matrices, a topic closely connected to Schr\"odinger bridges, was presented in \cite{franklin1989scaling}. In \cite{georgiou2015positive}, it was  shown that the Schr\"odinger bridge for Markov chains and quantum channels can be efficiently obtained from the fixed-point of a map which  contracts the  Hilbert metric.  We refer to \cite{chen2016entropic,SIREV} for more detailed information and further applications of this metric.
Below, following \cite{bushell1973hilbert}, we recall some basic concepts and results of this theory.\\

Let $\mathcal B$ be a real Banach space and let $\cK$ be a closed solid cone in $\mathcal B $, i.e., $\cK$ is closed with nonempty interior $\cK_0$ and is such that $\cK+\cK\subseteq \cK$, $\cK\cap -\cK=\{0\}$ as well as $\lambda \cK\subseteq \cK$ for all $\lambda\geq 0$. Define the partial order
\[
\bx\preceq \by \Leftrightarrow \by-\bx\in\cK,\quad \bx \prec \by \Leftrightarrow \by-\bx\in\cK_0
\]
and for $\bx,\by\in\cK^+:=\cK\backslash \{0\}$, define
\begin{eqnarray*}
M(\bx,\by)&:=&\inf\, \{\lambda\,\mid \bx\preceq \lambda \by\}\\
m(\bx,\by)&:=&\sup \{\lambda \mid \lambda \by\preceq \bx \}.
\end{eqnarray*}
Then, the Hilbert metric is defined on $\cK^+$ by
\[
d_H(\bx,\by):=\log\left(\frac{M(\bx,\by)}{m(\bx,\by)}\right).
\]

It is easily seen that $d_H(\cdot,\cdot)$ is symmetric, i.e., that $d_H(\bx,\by)=d_H(\by,\bx)$, and invariant under scaling by positive constants, since $d_H(\bx,\by)=d_H(\lambda \bx, \by)$ for any $\lambda>0$ and $\bx,\by\in\cK_0$.
Therefore $d_H(\lambda \bx,\bx)=0$. It can also be shown that the triangular inequality holds and, therefore, $d_H(\cdot,\cdot)$  is a {\em projective} metric that represents distance
between rays.

In our analysis we encounter two types of maps. We encounter inversion 
\begin{align}\label{eq:isometry}
\mathcal E_{\rm inv}\,:\; \bx\mapsto \bx^{-1},
\end{align}
of elements $\bx\in\cK_0$, and also linear maps that are {\em positive}, namely,
\[
\cE:\cK^+\rightarrow\cK^+.
\]
For both types of maps we are interested in
determining their {\em contraction ratio}, 
\begin{eqnarray*}
\kappa(\cE):=\inf\{\lambda \mid d_H(\cE(\bx),\cE(\by))\leq \lambda d_H(\bx,\by),\forall \bx,\by\in %{\rm int}\,
\cK_0\}.
\end{eqnarray*}
It turns out that the former are isometries in the Hilbert metric whereas the latter are contractions. Thus, the composition of a combination of both types turns out to be a contraction.

That \eqref{eq:isometry} is an isometry, i.e., $\kappa(\mathcal E_{\rm inv})=1$, follows immediately from
\[
M(\bx,\by)=\frac{1}{m(\bx^{-1},\by^{-1})} \label{eq:isometry2}.
\]
Then, by G.\ Birkhoff's theorem  \cite{birkhoff1957extensions,bushell1973hilbert}, any positive linear map $\mathcal E$
is contractive and the contraction ratio can be expressed in terms of the {\em projective diameter} 
\begin{eqnarray*}
\Delta(\cE):=\sup\{d_H(\cE(\bx),\cE(\by))\mid \bx,\by\in \cK_0\}
\end{eqnarray*}
of the range of $\mathcal E$.
Specifically, under these conditions, G.\ Birkhoff's theorem states that
\begin{equation}\label{condiam}
\kappa(\cE)=\tanh(\frac{1}{4}\Delta(\cE)).
\end{equation}
Thus, a positive linear map is strictly contractive if its projective diameter $\Delta(\cE)$ is finite.
A further useful observation, that follows from the triangular inequality, is that for any $\bx_0\in\mathcal \cK_0$,
\begin{align}\label{eq:useful}
\Delta(\cE)&\leq 2\sup\{d_H(\mathcal E(\bx),\bx_0) \mid \bx\in\mathcal K_0\}.
\end{align}
This allows bounding $\Delta(\cE)$ to ensure strict contraction for $\mathcal E$.

Important examples are provided by the positive orthant of $\R^n$, the cone of Hermitian, positive semidefinite matrices, spaces of bounded positive functions, and so on.
Notice that, in all cases, the boundary of the cone lies at an infinite distance from any interior point.

\section{Appendix B: Proof of Theorem \ref{thm:SBEsys} on the generalized Schr\"odinger system}\label{sec:GSS}
We herein establish existence and uniqueness of solution (up to constant positive scaling) for the system (\ref{eq:SBsystem}).
The steps mimick the analogous case for the SBP where the marginals are supported on a Euclidean space \cite{chen2016entropic}. The difference at present lies in that the support of functions includes an added point that represents the coffin state.

We assume throughout that the marginal measures $\rho_0,\rho_1$ are absolutely continuous with respect to the Lebesgue measure, in that $\rho_0(dx)=\rho_0(x)dx$ and $\rho_1(dx)=\rho_1(x)dx$ for density functions $\rho_0,\rho_1$ with support $S_0,S_1\subseteq \mathbb R^n$, respectively, and that $\rho_0$ is a probability measure %on (the Borel sets of) $S_0$
while $\rho_1$ is a nonnegative measure %on (the Borel sets of) $S_1$ such that
with $\int_{S_1}\rho_1(x)dx\le 1$. 
The case $\int_{S_1}\rho_1(x)dx= 1$ reduces to the standard SBP and is easy to handle. Thus, without loss of generality, we assume
	\[
		\int_{S_1}\rho_1(x)dx> 1.
	\]

As before, we let $q(0,x_0,t,x)$ for $0<t\le 1$ denote the fundamental solution of equation (\ref{eq:SBsystema}) and assume that
%$q(0,x_0,1,x_1)$ 
it is continuous and strictly positive on compact subsets. This is guaranteed by sufficient smoothness of the coefficients $b,V,a$, positivity of $V$ and positive definiteness on the whole domain of the matrix $a=(a_{ij})$. 
Under these assuptions we rewrite the Schr\"odinger system (\ref{eq:SBsystem}) as follows,
\begin{subequations}\label{eq:SBsystem2}
	\begin{align}
	\label{eq:SBsystem2a}
		 \hat{\varphi}(t,x)&= \int_{\mathbb R^n} q(0,x_0,t,x) \hat{\varphi}(0,x_0)dx_0,
		\\\label{eq:SBsystem2b}
		\hat\psi(1) &= \int_0^1\int_{\mathbb R^n} V(t,x)\hat\varphi(t,x) dxdt
		\\\label{eq:SBsystem2c}
		\varphi(0,x_0)&=\int_{\mathbb R^n} q(0,x_0,1,x_1)\varphi(1,x_1)dx_1+\int_0^1\int_{\mathbb R^n}q(0,x_0,t,x)V(t,x)\psi(t) dxdt
		\end{align}
\begin{align}
\label{eq:SBsystem2d}\nonumber
&\phantom{	=\int_{S_1} q(0,x_0,1,x_1)\varphi(1,x_1)dx_1+\int_0^1\int_{S_1}q(0,x_0,t,x)V(t,x)\psi(t) dxdt}\\[-40pt]
		\psi(t) &= {\rm constant},\quad 0\le t\le 1,
		\\\label{eq:SBsystem2e}
		\rho_0(x_0) &= \varphi(0,x_0)\hat\varphi(0,x_0)
		\\\label{eq:SBsystem2f}
		\rho_1(x_1) &= \varphi(1,x_1)\hat\varphi(1,x_1)
		\\\label{eq:SBsystem2g}
		  \psi(0)\hat\psi(0) &= 1-\int_{S_0}\rho_0=0
		 \\\label{eq:SBsystem2h}
		 \psi(1)\hat\psi(1) &= 1-\int_{S_1} \rho_1 =: c_1 > 0.
	\end{align}
	\end{subequations}
We consolidate the system of equations \eqref{eq:SBsystem2} into
\begin{subequations}\label{eq:SBsystem3}
	\begin{align}\label{eq:SBsystem3a}
		 \hat{\varphi}(t,x)&= \int_{S_0} q(0,x_0,t,x) \rho_0(x_0) \frac{1}{\varphi(0,x_0)}dx_0
		\\\nonumber
		\hat\psi(1) &= \int_0^1\int_{\mathbb R^n} V(t,x)\hat\varphi(t,x) dxdt\\\label{eq:SBsystem3b}
		&= \int_{S_0}\frac{1}{\varphi(0,x_0)}\rho_0(x_0)\underbrace{\int_0^1\int_{\mathbb R^n}  q(0,x_0,t,x) V(t,x) dxdt}_{r(x_0)}dx_0
		\\\label{eq:SBsystem3c}
		\varphi(0,x_0)&=\int_{S_1} q(0,x_0,1,x_1)\rho_1(x_1)\frac{1}{\hat\varphi(1,x_1)}dx_1+ \frac{1}{\hat\psi(1)}c_1\underbrace{\int_0^1\int_{\mathbb R^n} q(0,x_0,t,x)V(t,x) dxdt}_{r(x_0)}\\\label{eq:SBsystem3d}
		\psi(0) &=\psi(1)= \frac{1}{\hat\psi(1)}c_1.
			\end{align}
	\end{subequations}

These four equations, that encapsulate the Schr\"odinger system, suggest considering the composition of maps
    \begin{align}\label{eq:newmaps}
	{\hat\varphi(1,\cdot)\choose \hat\psi(1)}
        			\stackrel{\mathcal E_1}{\mapsto}
       {\frac{1}{\hat\varphi(1,\cdot)}\choose \frac{1}{\hat\psi(1)}}
       			\stackrel{\mathcal E_2}{\mapsto} 
			       {\varphi(0,\cdot) \choose \psi(0)}
       			\stackrel{\mathcal E_3}{\mapsto}
        {\frac{1}{\varphi(0,\cdot)}\choose \frac{1}{\psi(0)}}
       			 \stackrel{\mathcal E_4}{\mapsto} 
      {\hat\varphi(1,\cdot)\choose \hat\psi(1)}_{\rm next}
    \end{align}
in order to analyze existence of solutions. Indeed, we utilize the theory of the Hilbert metric (outlined in Appendix \ref{sec:appendix}) to show that the composition is a strict contraction along rays, resulting in a unique fixed point.
 
To this end, we consider the Banach space $\mathcal B=\mathcal L^\infty(\mathcal X)$ of {\em real-valued} functions $h(\cdot)$ on
$\cX =S\cup \{\mathfrak c\}$, where $S\in\R^n$ satisfies that $S_0\cup S_1\subset S$.
For notational convenience we use the vectorial notation ${h(x)\choose h(\mathfrak c)}$ to specify the values of $h$ on the two constituents of its support, for $x\in\R^n$ and $\mathfrak c\in\{\mathfrak c\}$. 
Thus, the norm of $h$ is
\[
\left\|h\right\|:=\max\{\|h|_S\|_\infty,|h(\mathfrak c)|\}.
\]
We consider the cone of positive functions
\[
\cK=\{ h \in \mathcal B\mid h(\mathfrak c)\geq 0 \mbox{ and } h(x)\geq 0 \; a.e.\ x\in S\}
\]
and the corresponding partial order $h_1\preceq h_2 \Leftrightarrow h_2-h_1\in\cK$ as usual.
We observe that $\cK$ is closed, solid and has a non-empty interior (of strictly positive a.e.\ functions) that we denote $\cK_0$; we also denote $\cK^+:=\cK\backslash \{0\}$. 

Note that in the on-going development, the components of functions $h\in\mathcal B$, that are (possibly time-dependent) functions on $\R^n$ and $\{\mathfrak c\}$, respectively, are differentiated as $\varphi,\psi$, or  $\hat\varphi,\hat\psi$, respectively, e.g., ${h(x)\choose h(\mathfrak c)}={\varphi(t,x)\choose \psi(t)}$.
We proceed to consider the composition of maps in \eqref{eq:newmaps} and establish first the following weaker version of Theorem \ref{thm:SBEsys}:\\

\begin{theorem}\label{thm:SBEsys_weak} Assuming that the support sets $S_0,S_1$ of the two marginals $\rho_0,\rho_1$ of the uSBP are compact, the claim in Theorem \ref{thm:SBEsys} holds true.
\end{theorem}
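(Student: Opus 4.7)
Following the Hilbert-metric framework of Appendix A and the decomposition sketched in \eqref{eq:newmaps}, the plan is to show that the composition $\mathcal E := \mathcal E_4\circ\mathcal E_3\circ\mathcal E_2\circ\mathcal E_1$ is a strict contraction in Hilbert's projective metric $d_H$ on the interior $\cK_0$ of the positive cone $\cK\subset\mathcal L^\infty(S\cup\{\mathfrak c\})$. Since the space of rays in $\cK_0$ is complete under $d_H$, Banach's fixed-point theorem will yield a unique fixed ray for $\mathcal E$, giving $(\hat\varphi(1,\cdot),\hat\psi(1))$ up to positive scaling. Pulling this back through \eqref{eq:SBsystem3a}--\eqref{eq:SBsystem3c} recovers $\hat\varphi$ and $\varphi$ everywhere, while $\psi\equiv c_1/\hat\psi(1)$ by \eqref{eq:SBsystem3d}, and the overall scaling is pinned uniquely by the normalization $\psi(1)\hat\psi(1)=c_1$ in \eqref{eq:SBsystem2h}.

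\textbf{Structure of the four maps.} The inversions $\mathcal E_1,\mathcal E_3$ act componentwise by $(\varphi,\psi)\mapsto(1/\varphi,1/\psi)$ on $\cK_0$, and the identity $M(\bx,\by)=1/m(\bx^{-1},\by^{-1})$ makes them Hilbert isometries, so $\kappa(\mathcal E_1)=\kappa(\mathcal E_3)=1$. The maps $\mathcal E_2,\mathcal E_4$ are positive \emph{linear}: from \eqref{eq:SBsystem3a}--\eqref{eq:SBsystem3b}, $\mathcal E_4$ sends $(1/\varphi(0,\cdot),\,1/\psi(0))$ to $(\hat\varphi(1,\cdot),\,\hat\psi(1))$ via the positive kernels $\rho_0(x_0)q(0,x_0,1,\cdot)$ and $\rho_0(x_0)r(x_0)$; from \eqref{eq:SBsystem3c}--\eqref{eq:SBsystem3d}, $\mathcal E_2$ sends $(1/\hat\varphi(1,\cdot),\,1/\hat\psi(1))$ to a $\rho_1 q$-weighted integral over $S_1$ plus a $c_1 r(x_0)$-weighted contribution from the coffin atom, together with $\psi(0)=c_1/\hat\psi(1)$. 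Both maps preserve strict positivity and send $\cK_0$ into $\cK_0$.

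\textbf{Finite projective diameter and conclusion.} By Birkhoff's theorem \eqref{condiam}, it suffices to show $\Delta(\mathcal E_2),\Delta(\mathcal E_4)<\infty$. Compactness of $S_0,S_1$, combined with smoothness of the coefficients and positive definiteness of $a$, yields uniform two-sided bounds $0<c\le q(0,x_0,t,x)\le C<\infty$ on $S_0\times[0,1]\times(S_0\cup S_1)$, and since $V\ge 0$ is continuous and not identically zero one likewise has $0<c_r\le r(x_0)\le C_r<\infty$ on $S_0$. Applying inequality \eqref{eq:useful} with a reference element such as $\bx_0=(\one_{S_0\cup S_1},1)$, these uniform ratios translate directly into a finite bound on $\sup_{\bx\in\cK_0}d_H(\mathcal E_4\bx,\bx_0)$, and the analogous estimate holds for $\mathcal E_2$. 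Hence $\mathcal E_2$ and $\mathcal E_4$ are strict Hilbert contractions, and together with the isometries we obtain $\kappa(\mathcal E)\le\kappa(\mathcal E_2)\kappa(\mathcal E_4)<1$; the fixed-point theorem then delivers existence and uniqueness (up to a positive scalar). I expect the main technical obstacle to be the diameter bound for $\mathcal E_2$: the atomic input $1/\hat\psi(1)$ enters additively with coefficient $c_1 r(x_0)$ into the continuous component $\varphi(0,x_0)$, so one must verify that this atom contribution stays uniformly comparable to the $S_1$-integral across all of $\cK_0$. Compactness, together with the positive lower bounds on $q,\rho_1,r$, is precisely what makes this coupling manageable and is the reason the non-compact case (Theorem \ref{thm:SBEsys}) requires a separate tail analysis.
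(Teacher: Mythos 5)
Your proof runs directly into the paper's central technical obstacle and misjudges it. You reduce matters to showing $\Delta(\mathcal E_2)<\infty$, claiming that compactness together with the uniform two-sided bounds on $q$ and $r$ delivers this. In fact $\Delta(\mathcal E_2)=\infty$. The coffin output of $\mathcal E_2$ is $\psi(0)=c_1/\hat\psi(1)$, which depends only on the coffin input; meanwhile the continuous output $\varphi(0,\cdot)$ stays bounded away from zero whenever the continuous input is. Taking a sequence of interior elements of $\cK_0$ with the continuous component fixed and the coffin component $1/\hat\psi(1)\to 0$, the image has a vanishing coffin component but a bounded-below continuous component, so the Hilbert distance between images diverges. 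This is precisely the structural fact the paper signals when it observes $\mathcal E_2({\star\choose 0})={\star\choose 0}$. You anticipated a difficulty from the additive atom contribution $c_1 r(x_0)/\hat\psi(1)$ to $\varphi(0,x_0)$, but that coupling is benign; the problem is the \emph{decoupled} coffin-to-coffin component of $\mathcal E_2$.

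The paper's fix is not a sharper estimate but a substitution: replace $\mathcal E_2$ by $\mathcal E_2'$ in which the coffin output is redefined to be $\varphi(0,z)$ for a fixed $z\in S_0$ (equation \eqref{eq:modification}). This makes the coffin output commensurate with the continuous one, so $\Delta(\mathcal E_2')<\infty$ and hence $\kappa(\mathcal E_2')<1$. Crucially the overall composition is unchanged, $\mathcal E_4\circ\mathcal E_3\circ\mathcal E_2\circ\mathcal E_1=\mathcal E_4\circ\mathcal E_3\circ\mathcal E_2'\circ\mathcal E_1$ (equation \eqref{eq:equivalence}), because neither output component of $\mathcal E_4$ depends on the coffin component of its input (see \eqref{eq:SBsystem3a}--\eqref{eq:SBsystem3b}). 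Without this substitution your contraction argument does not close. A secondary gap: the Hilbert fixed point gives only a fixed \emph{ray} $\cC(h)=\lambda h$, and since $\cC$ is homogeneous of degree one a genuine fixed point requires $\lambda=1$. The paper establishes $\lambda=1$ via the adjoint factorization $\cC=\cE^\dagger\circ\cE_{p_0}\circ\cE\circ\cE_{p_1}$ and the pairings $\langle u,\cE_{p_0}(u)\rangle=\langle\cE_{p_1}(u),u\rangle=1$; your appeal to the constraint $\psi(1)\hat\psi(1)=c_1$ does not pin anything, since that relation holds automatically by construction for every choice of $\hat\psi(1)$.
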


Recall the notation $M(\cdot,\cdot),m(\cdot,\cdot),\kappa(\cdot)$ and $\Delta(\cdot)$ from Appendix \ref{sec:appendix}.
As noted in the appendix, since $M(h_1,h_2)=m(h_1^{-1},h_2^{-1})^{-1}$ for $h_1,h_2\in\cK_0$,
both $\mathcal E_1$ and $\mathcal E_3$ are isometries. They are readily extended to isometries on $\mathcal K^+$ as well.

The map $\mathcal E_2$ is linear (homogeneous of degree $1$) and therefore, by Birkhoff's theorem given in the appendix, contractive on $\mathcal K^+$.
For the same reason, $\mathcal E_4$ is contractive. 
Unfortunately, neither map is strictly contractive. To see this, note that since, e.g., $\mathcal E_2 ({\star \choose 0})={\star \choose 0}$, with $\star$ denoting nonzero entries, certain elements on the boundary of $\mathcal K^+$ map onto the boundary and not the interior.

In order to establish the theorem we proceed as follows. Let $z\in S_0$ be an arbitrary fixed point in $S_0$. We modify equation \eqref{eq:SBsystem3d} of the Schr\"odinger system \eqref{eq:SBsystem3}, replacing it with
\begin{equation}\tag{\ref{eq:SBsystem3d}'} \label{eq:modification}
\tilde\psi(0)= \varphi(0,z) = \int_{S_1} q(0,z,1,x_1)\rho_1(x_1)\frac{1}{\hat\varphi(1,x_1)}dx_1+ \frac{1}{\hat\psi(1)}c_1 r(z),
\end{equation}
and, accordingly, replace $\mathcal E_2$ with a corresponding map that we refer to as $\mathcal E_2^\prime$. We then show the existence and uniqueness of solution for the modified system. Interestingly, except for the last of the elements in the $4$-tuple $(\hat\varphi(1,x), \hat \psi(1), \varphi(0,x), \psi(0))$, namely, $\psi(0)$, the remaining dictate the sought solution of the original Schr\"odinger system \eqref{eq:SBsystem}. This last entry plays no role in the original Schr\"odinger system. In particular,
	\begin{equation}\label{eq:equivalence}
		\mathcal E_4\circ\mathcal E_3\circ\mathcal E_2\circ\mathcal E_1 = \mathcal E_4\circ\mathcal E_3\circ\mathcal E_2^\prime\circ\mathcal E_1=:\cC.
	\end{equation}

We now consider $\mathcal E_2^\prime : h\mapsto g$ and show that it is strictly contractive in the Hilbert metric.
From \eqref{eq:useful}, taking as $\bx_0$ the function which is identically equal to $1$ on $S$ as well as on
$\{\mathfrak c\}$, we deduce that
\begin{align}\label{eq:deltabound}
\Delta(\mathcal E_2^\prime)\leq 2\sup \{\log\left(\frac{\max\{\sup_xg(x),g(\mathfrak c)\}}{\min\{\inf_xg(x),g(\mathfrak c)\}}\right)\mid g=\mathcal E_2^\prime(h) \mbox{ and }h\in \cK_0\}
\end{align}
Since $\rho_0,\rho_1$ are supported on compact sets $S_0,S_1$ of $\mathbb R^n$, respectively, we can choose $S$ to be compact as well. Since the kernel $q$ is positive and continuous, the kernel is bounded from below and above on $S\times S$. I.e., there exist $0<\alpha_1\le \beta_1<\infty$ such that
    \begin{equation}\label{eq:ab}
        \alpha_1\le q(0,x,1,y) \le \beta_1,
    \end{equation}
   for all $(x,y)\in S\times S$. Similarly, there exist $0<\alpha_2\le \beta_2<\infty$ such that
    \begin{equation}\label{eq:ab2}
    	\alpha_2\le r(x) \le \beta_2
    \end{equation}
   for all $x\in S$. 
   
Let $h(x)=\frac{1}{\hat\varphi(1,x)}$ and $h(\mathfrak c)=\frac{1}{\hat\psi(1)}$, then
\[
\alpha_1 \int_{S_1} \rho_1(x_1)h(x_1)dx_1\leq   \int_{S_1} q(0,x_0,1,x_1)\rho_1(x_1)h(x_1)dx_1   \leq
\beta_1 \int_{S_1} \rho_1(x_1)h(x_1)dx_1, ~\forall x_0 \in S.
\]
%and similarly, since $r(\cdot)$ is likewise bounded above and below,
It follows that, in view of \eqref{eq:SBsystem3c},
\[
\frac{\sup_xg(x)}{\inf_xg(x)}
 \leq \frac{\max_{i\in\{1,2\}}\beta_i}{\min_{i\in\{1,2\}}\alpha_i} <\infty.
\]
Thanks to the modification \eqref{eq:modification}, $g(\mathfrak c) = g(z)$ and therefore
\[
\frac{\max\{\sup_xg(x),g(\mathfrak c)\}}{\min\{\inf_xg(x),g(\mathfrak c)\}}
 \leq \frac{\max_{i\in\{1,2\}}\beta_i}{\min_{i\in\{1,2\}}\alpha_i} <\infty.
\]
Thus, from \eqref{eq:deltabound} and using Birkhoff's theorem \eqref{condiam},
\[
\kappa(\mathcal E_2^\prime)<1.
\]
As a consequence, the composition $\mathcal E_4\circ \mathcal E_3\circ \mathcal E_2^\prime\circ \mathcal E_1$ is strictly contractive, i.e.,
\[
\kappa(\mathcal E_4\circ\mathcal E_3\circ\mathcal E_2^\prime\circ\mathcal E_1)<1.
\]
It follows from \eqref{eq:equivalence} that
\[
\kappa(\cC)= \kappa(\mathcal E_4\circ\mathcal E_3\circ\mathcal E_2\circ\mathcal E_1)<1.
\]	

The above condition ensures that $\cC$ has a unique fixed point in terms of the Hilbert metric \cite{chen2016entropic}. Since Hilbert metric is a projective metric, the uniqueness is up to a constant scaling. Denote the fixed point on the unit sphere $U$ by $h$, then 
	\[
		\cC(h) = \lambda h
	\]
for some positive number $\lambda$. We next show $\lambda=1$. To this end, we introduce a different factorization of $\cC$ as
	\[
		\cC = \cE^\dagger \circ \cE_{p_0} \circ \cE \circ \cE_{p_1},
	\]
where
	\begin{eqnarray*}
		\cE(u) & = & \left[\begin{matrix} 
		\int_{S_1} q(0,x,1,x_1)u(x_1)dx_1+ r(x)u(\mathfrak c)
		\\
		u(\mathfrak c)
		\end{matrix}\right]
		\\
		\cE_{p_0} (u) &=& \left[\begin{matrix} \frac{\rho_0(x)}{u(x)}\\0\end{matrix}\right]
		\\
		\cE_{p_1}(u) &=& \left[\begin{matrix} \frac{\rho_1(x)}{u(x)}\\\frac{c_1}{u(\mathfrak c)}\end{matrix}\right],
	\end{eqnarray*}
and $\cE^\dagger$ is the adjoint operator of $\cE$. 
Clearly,
	\[
		\langle u, \cE_{p_0}(u) \rangle = \langle \cE_{p_1} (u), u\rangle = 1, ~\forall u\in \cK_0.
	\]
It follows that
	\begin{eqnarray*}
		1 &=& \langle \cE \circ \cE_{p_1}(h), \cE_{p_0} \circ \cE\circ\cE_{p_1} (h)\rangle
		\\
		&=& \langle \cE_{p_1} (h), \cE^\dagger \circ \cE_{p_0} \circ \cE\circ\cE_{p_1}(h)\rangle
		\\
		&=& \langle \cE_{p_1} (h), \cC(h)\rangle
		\\
		&=& \langle \cE_{p_1} (h), \lambda h\rangle = \lambda.
	\end{eqnarray*}
Once the fixed point $h$ is computed, the $4$-tuple $(\hat\varphi(1,x), \hat \psi(1), \varphi(0,x), \psi(0))$ can be recovered by
	\[
		\hat\varphi(1,x) = h(x),\; \hat\psi(1) = h(\mathfrak c),
	\]
and
	\[
		\left[\begin{matrix} \varphi(0,\cdot)\\\psi(0)\end{matrix}\right]
		=
		\cE_2\circ\cE_1(h).
	\]
The uniqueness of the $4$-tuple $(\hat\varphi(1,x), \hat \psi(1), \varphi(0,x), \psi(0))$ follows from the uniqueness of the fixed point $h$. This completes the proof of Theorem \ref{thm:SBEsys_weak}.
A standard argument \cite[Theorem 3.5]{chen2016entropic} can be used to extend the proof to the setting where $S_0, S_1$ are not necessarily compact for Theorem \ref{thm:SBEsys}.

{
\bibliographystyle{IEEEtran}
\bibliography{./refs}

\begin{thebibliography}{10}
\providecommand{\url}[1]{#1}
\csname url@rmstyle\endcsname
\providecommand{\newblock}{\relax}
\providecommand{\bibinfo}[2]{#2}
\providecommand\BIBentrySTDinterwordspacing{\spaceskip=0pt\relax}
\providecommand\BIBentryALTinterwordstretchfactor{4}
\providecommand\BIBentryALTinterwordspacing{\spaceskip=\fontdimen2\font plus
\BIBentryALTinterwordstretchfactor\fontdimen3\font minus
  \fontdimen4\font\relax}
\providecommand\BIBforeignlanguage[2]{{%
\expandafter\ifx\csname l@#1\endcsname\relax
\typeout{** WARNING: IEEEtran.bst: No hyphenation pattern has been}%
\typeout{** loaded for the language `#1'. Using the pattern for}%
\typeout{** the default language instead.}%
\else
\language=\csname l@#1\endcsname
\fi
#2}}

\bibitem{varadhan1966asymptotic}
S.~S. Varadhan, ``Asymptotic probabilities and differential equations,''
  \emph{Communications on Pure and Applied Mathematics}, vol.~19, no.~3, pp.
  261--286, 1966.

\bibitem{varadhan1984large}
S.~S. Varadhan, \emph{Large deviations and applications}.\hskip 1em plus 0.5em
  minus 0.4em\relax SIAM, 1984.

\bibitem{DemZei09}
A.~Dembo and O.~Zeitouni, \emph{Large deviations techniques and
  applications}.\hskip 1em plus 0.5em minus 0.4em\relax Springer Science \&
  Business Media, 2009, vol.~38.

\bibitem{Sch31}
E.~Schr{\"o}dinger, ``{\"U}ber die {U}mkehrung der {N}aturgesetze,''
  \emph{{S}itzungsberichte der {P}reuss {A}kad. {W}issen. {P}hys. {M}ath.
  {K}lasse, {S}onderausgabe}, vol.~IX, pp. 144--153, 1931.

\bibitem{Sch32}
E.~Schr{\"o}dinger, ``Sur la th{\'e}orie relativiste de l'{\'e}lectron et
  l'interpr{\'e}tation de la m{\'e}canique quantique,'' in \emph{Annales de
  l'institut Henri Poincar{\'e}}, vol.~2, no.~4.\hskip 1em plus 0.5em minus
  0.4em\relax Presses universitaires de France, 1932, pp. 269--310.

\bibitem{boltzmann1877uber}
L.~Boltzmann, ``Uber die {B}eziehung zwischen dem zweiten {H}auptsatze der
  mechanischen {W}{\"a}rmetheorie und der {W}ahrscheinlichkeitsrechnung resp.
  dem {S}{\"a}tzen {\"u}ber das {W}{\"a}rmegleichgewicht,'' \emph{Wiener
  Berichte}, vol.~76, pp. 373--435, 1877.

\bibitem{sanov1961probability}
I.~N. Sanov, ``On the probability of large deviations of random variables,''
  \emph{Mat. Sb. N. S. / Selected Translations in Mathematical Statistics and
  Probability}, vol. 42 / 1, pp. 11--44 / 213--244, 1957 / 1961.

\bibitem{fortet1940resolution}
R.~Fortet, ``R{\'e}solution d’un systeme d’{\'e}quations de {M}.
  {S}chr{\"o}dinger,'' \emph{J. Math. Pure Appl. IX}, vol.~1, pp. 83--105,
  1940.

\bibitem{beurling1960automorphism}
A.~Beurling, ``An automorphism of product measures,'' \emph{Annals of
  Mathematics}, pp. 189--200, 1960.

\bibitem{Jam74}
B.~Jamison, ``Reciprocal processes,'' \emph{Z. Wahrscheinlichkeitstheorie verw.
  Gebiete}, vol.~30, pp. 65--86, 1974.

\bibitem{jamison1975markov}
B.~Jamison, ``The {M}arkov processes of {S}chr{\"o}dinger,'' \emph{Zeitschrift
  f{\"u}r Wahrscheinlichkeitstheorie und Verwandte Gebiete}, vol.~32, no.~4,
  pp. 323--331, 1975.

\bibitem{Fol88}
H.~F{\"o}llmer, ``Random fields and diffusion processes,'' in \emph{{\'E}cole
  d'{\'E}t{\'e} de Probabilit{\'e}s de Saint-Flour XV--XVII, 1985--87}.\hskip
  1em plus 0.5em minus 0.4em\relax Springer, 1988, pp. 101--203.

\bibitem{zambrini1986stochastic}
J.~Zambrini, ``Stochastic mechanics according to {E}. {S}chr{\"o}dinger,''
  \emph{Physical review A}, vol.~33, no.~3, p. 1532, 1986.

\bibitem{nagasawa1989transformations}
M.~Nagasawa, ``Transformations of diffusion and {S}chr{\"o}dinger processes,''
  \emph{Probability theory and related fields}, vol.~82, no.~1, pp. 109--136,
  1989.

\bibitem{wakolbinger1989simplified}
A.~Wakolbinger, ``A simplified variational characterization of
  {S}chr{\"o}dinger processes,'' \emph{Journal of mathematical physics},
  vol.~30, no.~12, pp. 2943--2946, 1989.

\bibitem{Bla92}
A.~Blaqui\`ere, ``Controllability of a {F}okker-{P}lanck equation, the
  {S}chr{\"o}dinger system, and a related stochastic optimal control (revised
  version),'' \emph{Dynamics and Control}, vol.~2, no.~3, pp. 235--253, 1992.

\bibitem{dawson1990schrodinger}
D.~Dawson, L.~Gorostiza, and A.~Wakolbinger, ``Schr{\"o}dinger processes and
  large deviations,'' \emph{Journal of mathematical physics}, vol.~31, no.~10,
  pp. 2385--2388, 1990.

\bibitem{nagasawa1990stochastic}
M.~Nagasawa, ``Stochastic variational principle of {S}chr{\"o}dinger
  processes,'' in \emph{Seminar on Stochastic Processes, 1989}.\hskip 1em plus
  0.5em minus 0.4em\relax Springer, 1990, pp. 165--175.

\bibitem{Wak90}
A.~Wakolbinger, ``Schr{\"o}dinger bridges from 1931 to 1991,'' in \emph{Proc.
  of the 4th Latin American Congress in Probability and Mathematical
  Statistics, Mexico City}, 1990, pp. 61--79.

\bibitem{aebi1992large}
R.~Aebi and M.~Nagasawa, ``Large deviations and the propagation of chaos for
  {S}chr{\"o}dinger processes,'' \emph{Probability Theory and Related Fields},
  vol.~94, no.~1, pp. 53--68, 1992.

\bibitem{Mik04}
T.~Mikami, ``Monge's problem with a quadratic cost by the zero-noise limit of
  h-path processes,'' \emph{Probability theory and related fields}, vol. 129,
  no.~2, pp. 245--260, 2004.

\bibitem{MikThi08}
T.~Mikami and M.~Thieullen, ``Optimal transportation problem by stochastic
  optimal control,'' \emph{SIAM Journal on Control and Optimization}, vol.~47,
  no.~3, pp. 1127--1139, 2008.

\bibitem{CheGeoPav14a}
Y.~Chen., T.~Georgiou, and M.~Pavon, ``Optimal steering of a linear stochastic
  system to a final probability distribution, {P}art {I},'' \emph{IEEE Trans.\
  on Automatic Control}, vol.~61, no.~5, pp. 1158--1169, 2016.

\bibitem{Leo12}
C.~L{\'e}onard, ``From the {S}chr{\"o}dinger problem to the
  {M}onge--{K}antorovich problem,'' \emph{Journal of Functional Analysis}, vol.
  262, no.~4, pp. 1879--1920, 2012.

\bibitem{Leo14}
C.~L{\'e}onard, ``A survey of the {S}chr{\"o}dinger problem and some of its
  connections with optimal transport,'' \emph{Dicrete Contin. Dyn. Syst. A},
  vol.~34, no.~4, pp. 1533--1574, 2014.

\bibitem{CheGeoPav14e}
Y.~Chen, T.~T. Georgiou, and M.~Pavon, ``On the relation between optimal
  transport and {S}chr{\"o}dinger bridges: A stochastic control viewpoint,''
  \emph{Journal of Optimization Theory and Applications}, vol. 169, no.~2, pp.
  671--691, 2016.

\bibitem{SIREV}
Y.~Chen, T.~T. Georgiou, and M.~Pavon, ``Stochastic control liaisons: {R}ichard
  {S}inkhorn {M}eets {G}aspard {M}onge on a {S}chr\"{o}dinger bridge,''
  \emph{SIAM Review}, vol.~63, no.~2, pp. 249--313, 2021.

\bibitem{Con19}
G.~Conforti, ``A second order equation for schr{\"o}dinger bridges with
  applications to the hot gas experiment and entropic transportation cost,''
  \emph{Probability Theory and Related Fields}, vol. 174, no.~1, pp. 1--47,
  2019.

\bibitem{koehl2021physics}
P.~Koehl, M.~Delarue, and H.~Orland, ``Physics approach to the variable-mass
  optimal-transport problem,'' \emph{Physical Review E}, vol. 103, no.~1, p.
  012113, 2021.

\bibitem{leonard2011stochastic}
C.~L{\'e}onard, ``Stochastic derivatives and generalized h-transforms of
  {M}arkov processes,'' \emph{arXiv preprint arXiv:1102.3172}, 2011.

\bibitem{CheGeoPav14c}
Y.~Chen, T.~T. Georgiou, and M.~Pavon, ``Optimal steering of inertial particles
  diffusing anisotropically with losses,'' in \emph{Proc. American Control
  Conf.}, 2015, pp. 1252--1257.

\bibitem{CheGeoPav17a}
Y.~Chen, T.~Georgiou, and M.~Pavon, ``Optimal steering of a linear stochastic
  system to a final probability distribution, {P}art {III},''
  \emph{arXiv:1608.03622, {IEEE Trans.\ on Automatic Control}, to appear},
  2017.

\bibitem{chizat2018scaling}
L.~Chizat, G.~Peyr{\'e}, B.~Schmitzer, and F.-X. Vialard, ``Scaling algorithms
  for unbalanced optimal transport problems,'' \emph{Mathematics of
  Computation}, vol.~87, no. 314, pp. 2563--2609, 2018.

\bibitem{chizat2018unbalanced}
L.~Chizat, G.~Peyr{\'e}, B.~Schmitzer, and F.-X. Vialard, ``Unbalanced optimal
  transport: Dynamic and {K}antorovich formulations,'' \emph{Journal of
  Functional Analysis}, vol. 274, no.~11, pp. 3090--3123, 2018.

\bibitem{chen2019interpolation}
Y.~Chen, T.~T. Georgiou, and A.~Tannenbaum, ``Interpolation of matrices and
  matrix-valued densities: The unbalanced case,'' \emph{European Journal of
  Applied Mathematics}, vol.~30, no.~3, pp. 458--480, 2019.

\bibitem{chen2021optimal}
Y.~Chen, T.~T. Georgiou, and M.~Pavon, ``Optimal transport in systems and
  control,'' \emph{Annual Review of Control, Robotics, and Autonomous Systems},
  vol.~4, pp. 89--113, 2021.

\bibitem{chen2016entropic}
Y.~Chen, T.~Georgiou, and M.~Pavon, ``Entropic and displacement interpolation:
  a computational approach using the {H}ilbert metric,'' \emph{SIAM Journal on
  Applied Mathematics}, vol.~76, no.~6, pp. 2375--2396, 2016.

\bibitem{Bil99}
P.~Billingsley, \emph{Convergence of probability measures}.\hskip 1em plus
  0.5em minus 0.4em\relax John Wiley \& Sons, 1999.

\bibitem{Pol12}
D.~Pollard, \emph{Convergence of stochastic processes}.\hskip 1em plus 0.5em
  minus 0.4em\relax Springer Science \& Business Media, 2012.

\bibitem{EthKur09}
S.~N. Ethier and T.~G. Kurtz, \emph{Markov processes: characterization and
  convergence}.\hskip 1em plus 0.5em minus 0.4em\relax John Wiley \& Sons,
  2009, vol. 282.

\bibitem{revuz2013continuous}
D.~Revuz and M.~Yor, \emph{Continuous martingales and {B}rownian motion}.\hskip
  1em plus 0.5em minus 0.4em\relax Springer Science \& Business Media, 2013,
  vol. 293.

\bibitem{Leo16}
C.~L{\'e}onard, ``Lazy random walks and optimal transport on graphs,''
  \emph{The annals of Probability}, vol.~44, no.~3, pp. 1864--1915, 2016.

\bibitem{birkhoff1957extensions}
G.~Birkhoff, ``Extensions of {J}entzsch's theorem,'' \emph{Transactions of the
  American Mathematical Society}, vol.~85, no.~1, pp. 219--227, 1957.

\bibitem{bushell1973projective}
P.~Bushell, ``On the projective contraction ratio for positive linear
  mappings,'' \emph{Journal of the London Mathematical Society}, vol.~2, no.~2,
  pp. 256--258, 1973.

\bibitem{bushell1973hilbert}
P.~J. Bushell, ``Hilbert's metric and positive contraction mappings in a
  {B}anach space,'' \emph{Archive for Rational Mechanics and Analysis},
  vol.~52, no.~4, pp. 330--338, 1973.

\bibitem{franklin1989scaling}
J.~Franklin and J.~Lorenz, ``On the scaling of multidimensional matrices,''
  \emph{Linear Algebra and its applications}, vol. 114, pp. 717--735, 1989.

\bibitem{georgiou2015positive}
T.~T. Georgiou and M.~Pavon, ``Positive contraction mappings for classical and
  quantum {S}chr{\"o}dinger systems,'' \emph{Journal of Mathematical Physics},
  vol.~56, no.~3, p. 033301, 2015.

\end{thebibliography}
}
\end{document}